\newcommand{\Z}{\mathbb{Z}}
\newcommand{\C}{\mathbb{C}}
\newcommand{\R}{\mathbb{R}}
\newcommand{\Ztre}{\Z/3\Z}
\newcommand{\wG}{\widehat{G}}
\newcommand{\GL}{\mathrm{GL}(2,\C)}
\newcommand{\SL}{\mathrm{SL}(2,\C)}
\newcommand{\SLL}{\mathrm{SL}}
\newcommand{\Ho}{\mathrm{H}}
\newcommand{\Aut}{\mathrm{Aut}}
\newcommand{\Lie}{\mathrm{Lie}}
\newcommand{\mf}{\mathfrak}
\newcommand{\mc}{\mathcal}
\newcommand{\g}{\mf{g}}
\newcommand{\h}{\mf{h}}
\renewcommand{\u}{\mf{u}}
\newcommand{\gl}{\mf{gl}}
\newcommand{\sll}{\mf{sl}}
\newcommand{\gC}{\mf{C}}
\renewcommand{\a}{\mf{a}}
\newcommand{\e}[1]{\text{\small$|#1\rangle$}}
\newcommand{\se}[1]{\text{\tiny$|#1\rangle$}}
\newcommand{\cV}{\mc{V}}
\newcommand{\cF}{\mc{F}}
\newcommand{\sG}{\mc{G}}
\newcommand{\mye}[1]{\text{\small$|#1\rangle$}}
\newcommand{\z}{\mf{z}}
\newcommand{\diag}{\mathrm{diag}}
\newcommand{\ad}{\mathrm{ad}}
\newcommand{\id}{\mathrm{id}}
\numberwithin{equation}{section}
\newtheorem{theorem}{Theorem}[section]
\newtheorem{crit}[theorem]{Criterion}
\newtheorem{lemma}[theorem]{Lemma}
\newtheorem{defn}[theorem]{Definition}            
\theoremstyle{remark}
\theoremstyle{remark}
\newtheorem{rmk}[theorem]{Remark}
\title{Classification of real and complex 3-qutrit states}
\author[S. Di Trani]{Sabino Di Trani}
\address[Di Trani]{Dipartimento di Matematica "Guido Castelnuovo", Sapienza - Università di Roma, Roma, Italy}
\author[W. A. de Graaf]{Willem A.\ de Graaf}
\address[de Graaf]{Department of Mathematics, University of Trento, Povo (Trento), Italy}
\author[A. Marrani]{Alessio Marrani}
\address[Marrani]{Instituto de F\'\i sica Teorica, Departamento de F{\'{i}}sica, Univ. de Murcia, Campus de Espinardo, Murcia, Spain}
\email{\rm sabino.ditrani@uniroma1.it, willem.degraaf@unitn.it,alessio.marrani@um.es}
\date{}
\begin{document}

\maketitle

\begin{abstract}
In this paper we classify the orbits of the group $\SLL(3,F)^3$ on the
space $F^3\otimes F^3\otimes F^3$ for $F=\C$ and $F=\R$. This is known as the
classification of complex and real 3-qutrit states. We also give an overview
of physical theories where these classifications are relevant.
\end{abstract}

\section{Introduction}\label{Sect:Int}

Elements of the vector space $(\C^3)^{\otimes n}=\C^3\otimes \cdots \otimes \C^3$ ($n$ factors
$\C^3$) are called {\em $n$-qutrit states}. On this space acts the
group $\SLL(3,\C)^n$, also called the SLOCC-group (Stochastic Local Quantum
Operations assisted by Classical Communication). One of the basic problems
in this context is to determine the orbits of this group, or in other words,
to determine the classification of the $n$-qutrit states under
SLOCC-equivalence. This has applications in quantum information theory,
as well as in high energy particle physics, see Section \ref{sec:cq}.

For $n=3$ the group and its representation can be constructed using a
grading of the simple Lie algebra of type $\mathfrak{e}_6$ (see Section
\ref{sec:E6} for the details). As a consequence,
Vinberg's theory of $\theta$-groups (\cite{vinberg,vinberg2}) can be used
to classify the orbits. This has been carried out by Nurmiev in \cite{Nurmiev}.
In this paper we revisit his classification, adding many details such
as descriptions of the stabilizers of the orbit representatives.

One of the main features of the construction of the representation using
a grading is that the elements of $(\C^3)^{\otimes 3}$ are shown to have a Jordan
decomposition. That is, each $v\in (\C^3)^{\otimes 3}$ can uniquely be
written as $v=s+e$, where $s,e\in (\C^3)^{\otimes 3}$, $s$ is {\em semisimple},
$e$ is {\em nilpotent} and $[s,e]=0$ (the latter is the Lie product in the
simple Lie algebra of type $\mathfrak{e}_6$). Here the terms semisimple and
nilpotent can be defined in the Lie algebra of type $\mathfrak{e}_6$. It is also
possible to define them more geometrically as follows. An element $s\in
(\C^3)^{\otimes 3}$ is semisimple if its orbit under $\SLL(3,\C)^3$ is closed
(in the Zarisky topology). An element $e\in (\C^3)^{\otimes 3}$ is nilpotent if
the closure of its orbit under $\SLL(3,\C)^3$ contains 0. An element is said
to be {\em mixed} if it is neither semisimple nor nilpotent. In this paper
we show the following result. 

\begin{theorem}
There are 90 classes of elements of $(\C^3)^{\otimes 3}$ with the following
properties. Each element of $(\C^3)^{\otimes 3}$ is $\SLL(3,\C)^3$-conjugate
to an element of precisely one class. All elements of a fixed class have
the same stabilizer in $\SLL(3,\C)^3$. We have 63 classes consisting of one
nilpotent element (see Table \ref{CNilp}). There are 4 classes consisting of
infinitely many semisimple elements and depending on respectively 3,2,1,1
parameters (see Section \ref{sec:semsimC}). 
Elements of one of these classes are $\SLL(3,\C)^3$-conjugate
if and only if they are conjugate under an explicitly given finite group,
of order 648, 18, 6, 3 respectively. Finally, there are in total 23 classes
of mixed elements (see Tables \ref{NilpotentMixed2}, \ref{NilpotentMixed3},
\ref{NilpotentMixed4}).
\end{theorem}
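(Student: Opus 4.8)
The plan is to exploit Vinberg's theory of $\theta$-groups, which (as noted in the excerpt) applies because the representation of $\SLL(3,\C)^3$ on $(\C^3)^{\otimes 3}$ arises from a $\Z/3\Z$-grading of the simple Lie algebra $\g$ of type $\mathfrak{e}_6$. The foundational input is the Jordan decomposition $v=s+e$ with $s$ semisimple, $e$ nilpotent, and $[s,e]=0$; since this decomposition is unique and $\SLL(3,\C)^3$-equivariant, the classification factors into three manageable subproblems: classify the semisimple elements, classify the nilpotent elements, and then classify the mixed elements by understanding, for each semisimple $s$, which nilpotent elements $e$ in the centralizer can be added while respecting $\SLL(3,\C)^3$-conjugacy.

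First I would classify the semisimple elements. By Vinberg's theory these are parametrized, up to conjugacy, by a Cartan subspace $\mf{c}$ of the degree-one graded piece modulo the action of the little Weyl group $W$, a finite group. I would fix an explicit Cartan subspace, compute $W$ as a finite subgroup of $\GL(\mf{c})$, and read off the orbit structure; this should yield the four families depending on $3,2,1,1$ parameters, with the stated finite groups of orders $648,18,6,3$ governing when two semisimple elements are equivalent. Second, I would classify the nilpotent orbits. These correspond to certain nilpotent orbits in $\g$ meeting the graded piece, and can be enumerated via the Kostant–Sekiguchi-type correspondence together with weighted Dynkin diagram or $\mathfrak{sl}_2$-triple data; concretely I expect to produce normal forms and count $63$ nonzero nilpotent classes (Table \ref{CNilp}), computing each stabilizer in $\SLL(3,\C)^3$ along the way.

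The main obstacle, and the technical heart of the argument, is the mixed case. For a mixed element $v=s+e$ the nilpotent part $e$ lives in the centralizer $\z=\z_{\g}(s)$, and the relevant equivalences are governed not by all of $\SLL(3,\C)^3$ but by the stabilizer of $s$, which acts on the nilpotent elements of the appropriate graded piece of the reductive algebra $\z$. So for each semisimple representative $s$ I would identify $\z$ as a reductive Lie algebra with an induced grading, classify the nilpotent orbits in that smaller setting, and then quotient by the residual finite symmetry coming from the normalizer of $s$ in the little Weyl group. Collating these across all semisimple types, and discarding coincidences, should yield the $23$ mixed classes recorded in Tables \ref{NilpotentMixed2}, \ref{NilpotentMixed3}, and \ref{NilpotentMixed4}.

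Finally I would verify the global bookkeeping: that the three lists are disjoint, that every element falls into exactly one class by the uniqueness of the Jordan decomposition, and that the stabilizer is constant on each class. The constancy of the stabilizer is immediate for the nilpotent and generic semisimple strata by equivariance, but for the parametrized families one must check that the stabilizer does not jump at special parameter values outside the already-separated degenerate subfamilies; I would confirm this by direct computation of $\z_{\SLL(3,\C)^3}(v)$ on each representative. Summing $63+4+23=90$ then gives the stated count, completing the proof.
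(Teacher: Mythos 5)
Your proposal is correct and follows essentially the same route as the paper: Jordan decomposition splitting the problem into semisimple, nilpotent and mixed parts; a fixed Cartan subspace modulo the little Weyl group (of order 648, Shephard--Todd type $G_{25}$) stratified by point-stabilizer reflection subgroups to get the four semisimple families with groups of order $648, 18, 6, 3$; homogeneous $\mathfrak{sl}_2$-triples and Weyl-group conjugacy of characteristics for the 63 nilpotent classes; and, for the mixed classes, classification of the nilpotent elements of $\mathfrak{z}_{\mathfrak{g}}(s)\cap\mathfrak{g}_1$ under the stabilizer $Z_{G_0}(s)$ together with its component group. The one point handled differently is the constancy of stabilizers along each parametrized family, which you propose to check by direct symbolic computation, whereas the paper proves it structurally (Lemma \ref{lem:stab}): equal reflection stabilizers in the Weyl group force equal sets of vanishing roots, hence equal centralizers in $\mathfrak{g}$, and connectedness of centralizers of semisimple elements then gives literal equality of the stabilizer subgroups.
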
  

We also consider the real field. The elements of
$(\R^3)^{\otimes n}=\R^3\otimes \cdots \otimes \R^3$ ($n$ factors
$\R^3$) are called {\em $n$-retrit states}. On this space we consider the
action of the group $\SLL(3,\R)^n$. This has applications in real quantum
mechanics, as well as to a ``black hole \& black string /qutrit correspondence''
in high-energy theoretical physics, see Section \ref{sec:rq}.

The main result of the present paper is the classification of the
$\SLL(3,\R)^3$-orbits on the space $(\R^3)^3$. To obtain this classification
we use the classification of the complex orbits as well as Galois
cohomological methods developed in \cite{BorovoideGraafLe,BorovoideGraafLe2}.
We summarize our findings for the real case in the following theorem.

\begin{theorem}
There are 109 classes of elements of $(\R^3)^{\otimes 3}$ with the following
properties. Each element of $(\R^3)^{\otimes 3}$ is $\SLL(3,\R)^3$-conjugate
to an element of precisely one class. The elements of a fixed class all have
the same stabilizer in $\SLL(3,\R)^3$. We have 70 classes consisting of one
nilpotent element (the elements of Table \ref{CNilp}, along with Table
\ref{RNilp}). There are 6 classes consisting of infinitely many semisimple
elements (these are the families of Section \ref{sec:semsimC}, as well as
the two families found in Section \ref{semsim:real}). 
Finally, there are in total 33 classes of mixed elements (the elements of
Tables \ref{NilpotentMixed2}, \ref{NilpotentMixed3}, \ref{NilpotentMixed4},
the elements given in \eqref{eq:mixedcan}, and those in Tables
\ref{tab:mixedR1}, \ref{tab:mixedR2}).
\end{theorem}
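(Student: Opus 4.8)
The plan is to descend the complex classification established above to the real field by means of the Galois-cohomological machinery of \cite{BorovoideGraafLe,BorovoideGraafLe2}. Write $\Gamma=\mathrm{Gal}(\C/\R)=\{1,\sigma\}$ and $G=\SLL(3,\C)^3$. The starting point is that every $v\in(\R^3)^{\otimes 3}$ lies in a unique complex orbit, and its complex Jordan decomposition $v=s+e$ has real parts: applying $\sigma$ and using $\sigma(v)=v$ together with the uniqueness of the decomposition forces $\sigma(s)=s$ and $\sigma(e)=e$. Hence the trichotomy into nilpotent, semisimple, and mixed classes is preserved under restriction to $\R$, and I would organize the descent according to these three types.

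The key reduction is the following. For a complex orbit $O=G\cdot x$ that is $\Gamma$-stable and contains a real representative $x_0$, the $\SLL(3,\R)^3$-orbits contained in $O\cap(\R^3)^{\otimes 3}$ are in canonical bijection with the kernel of the map of pointed sets $H^1(\Gamma,G_{x_0})\to H^1(\Gamma,G)$, where $G_{x_0}$ is the stabilizer already determined in the complex classification. Since $\SLL(3)$ is simply connected and semisimple, $H^1(\Gamma,\SLL(3,\C))=1$, so $H^1(\Gamma,G)=1$ and the kernel is the whole of $H^1(\Gamma,G_{x_0})$. Thus for each $\Gamma$-stable complex orbit I must (a) decide whether a real point exists, and (b) when it does, enumerate $H^1(\Gamma,G_{x_0})$, reading off one real class per cohomology class together with its twisted real stabilizer.

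First I would treat the nilpotent orbits: the $63$ complex representatives of Table \ref{CNilp} carry rational entries, so each complex nilpotent orbit is $\Gamma$-stable with a real point, and the real nilpotent classes are governed entirely by $H^1(\Gamma,G_{x_0})$ for the nilpotent stabilizers; the disconnected stabilizers are exactly where a single complex orbit splits, accounting for the passage from $63$ to $70$. For the semisimple families I would use the description of the closed orbits via the Cartan subspace $\a$ and the finite groups of orders $648,18,6,3$ from the complex classification: real semisimple orbits correspond to $\Gamma$-fixed points of $\a$ modulo the real points of these groups, together with the cohomological twists coming from the real forms of the associated torus, which is what produces the two further families of Section \ref{semsim:real}. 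The mixed orbits I would handle by pairing a real semisimple part $s$ with a real nilpotent part $e$ lying in its centralizer, applying the same $H^1$ computation to the reductive centralizer $G_s$ acting on the nilpotents of its own graded component.

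The hardest part will be the systematic evaluation of $H^1(\Gamma,G_{x_0})$ for the many disconnected stabilizers, since these cohomology sets are controlled by the component group $\pi_0(G_{x_0})$ with its $\Gamma$-action together with the contribution of the identity component, and one must extract the correct twisted real forms and verify that distinct cohomology classes really yield non-conjugate real elements with the asserted stabilizers. A secondary difficulty is confirming, for the semisimple and mixed families, that every relevant $\Gamma$-stable complex orbit actually contains a real point, so that no potential real class is spuriously lost in the count $90\to 109$.
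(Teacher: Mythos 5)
Your overall strategy is the paper's: exploit $\Ho^1\SLL(3,\C)^3=1$ so that the real orbits inside a $\Gamma$-stable complex orbit with real point $x_0$ biject with $\Ho^1 Z(x_0)$ (Theorem \ref{RealOrbits}), then run through nilpotent, semisimple and mixed types. But two of your three cases have genuine gaps. For the semisimple elements, the orbit-by-orbit $\Ho^1(Z_{x_0})$ framework does not by itself organize the continuous families, and your proposed mechanism --- ``$\Gamma$-fixed points of the Cartan subspace modulo real points of the finite groups, together with twists coming from real forms of the associated torus'' --- is not the correct one. The two extra real families of Section \ref{semsim:real} arise from nontrivial classes in $\Ho^1\Gamma_p$ where $\Gamma_p=N_{W_\gC}(W_p)/W_p$ is the \emph{finite} group (of order $648$, $18$, $6$, $3$) acting on each stratum $\gC_p^\circ$: a complex orbit through $q\in\gC_p^\circ$ has a real point iff $\bar q=\gamma_i^{-1}q$ for some cocycle $\gamma_i$, and the real representative is produced by lifting $\gamma_i$ to a cocycle $n_i\in N_{\wG}(\cF)$ and solving $g^{-1}\bar g=n_i$ (Theorem \ref{thm:realsemsim}); this is how \eqref{eq:newsemsim1} and \eqref{eq:newsemsim2} are found, and how one proves no further families exist (e.g.\ $\Ho^1 W_\gC=1$ kills twists in the first family). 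You also need Lemma \ref{lem:stab} (constancy of stabilizers on each $\gC_p^\circ$) both to apply the cohomology uniformly along a family and to justify the theorem's clause that all elements of a class share a stabilizer --- a claim your proposal never addresses. Similarly, for mixed elements with \emph{non-canonical} semisimple part $s=g\cdot s_0$ (where $s_0$ is non-real), the complex classification of nilpotent parts lives in $\z_\g(s_0)$, and deciding which of those orbits survive over $\R$ requires the twisted conjugation $\mu(x)=n\bar x$ and Lemma \ref{lem:Y}: in the third family the orbits in groups 2, 5 and 7 of Table \ref{NilpotentMixed3} are discarded exactly this way. You flag real-point existence as a difficulty but supply no tool for it, and without $\mu$ your plan of ``pairing a real semisimple part with a real nilpotent part in its centralizer'' cannot be executed, since no real nilpotent part may exist even when the complex orbit is $\Gamma$-stable.

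Two smaller points. Your assertion that ``the disconnected stabilizers are exactly where a single complex orbit splits'' is false as stated: most stabilizers in Table \ref{CNilp} are disconnected with component groups of odd order such as $(\Ztre)^k$, and these have trivial cohomology by Criteria \ref{Hvan:oddorder}--\ref{crit:alg1}; only orbits 3, 9 and 20 (those with an even-order contribution, $\Zdue$ or $S_3$) split, giving $63+7=70$. Finally, for nilpotent orbits you work with $\Ho^1 Z(e)$ directly; the paper instead passes to homogeneous $\sll_2$-triples and computes $\Ho^1 Z_{\wG}(h,e,f)$ (Theorem \ref{NilpCent}). The two agree because $Z(e)$ is the semidirect product of the reductive $Z(h,e,f)$ with a unipotent group, whose real Galois cohomology vanishes, but you should say this: the triple formulation is what makes the stabilizers reductive, the criteria applicable, and the equality ``real conjugacy of nilpotents $=$ real conjugacy of triples'' available.
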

\subsection{Organization of the Paper:} Section~\ref{sec:q3ts} is devoted to giving a general physical motivation for studying complex and real qutrits, while in Section~\ref{sec:BH} we review the physical models involving $3$-qutrits. 

In the two subsequent sections we give a description of the mathematical techniques we used to obtain our computations. Using Vinberg theory of $\theta$-groups, in Section~\ref{sec:E6} we describe how  the $\SLL(3,\C)^3$-module $\C^3 \otimes \C^3 \otimes \C^3$ can be constructed starting from a suitable $\Z/3\Z$-grading on the Lie algebra $\mf{e}_6$, induced by an order $3$ automorphism.
In Section~\ref{sec:galois} we briefly recall how Galois cohomology can be used to achieve a description of real orbits, starting from a complete classification of the complex ones. 

Finally, in Sections~\ref{sec:nil}, \ref{semis} and \ref{sec:mixed} we give the details of our
computations that yield the classifications of the real and complex orbits, as well a the 
tables listing representatives of those orbits.
\section{Qutrits}\label{sec:q3ts}

\subsection{Complex qutrits}\label{sec:cq}

A quantum trit (\textit{qutrit}) is the superposition of three orthogonal
basis states $\left\{ \left\vert 0\right\rangle ,\left\vert 1\right\rangle
,\left\vert 2\right\rangle \right\} $ \cite{1-wiki}, rather than the two
which characterize a quantum bit (\textit{qubit}), namely :%
\begin{equation}
\left\vert \psi \right\rangle =\alpha \left\vert 0\right\rangle +\beta
\left\vert 1\right\rangle +\gamma \left\vert 2\right\rangle ,~\text{with~}%
\alpha ,\beta ,\gamma \in \mathbb{C}:\left\vert \alpha \right\vert
^{2}+\left\vert \beta \right\vert ^{2}+\left\vert \gamma \right\vert ^{2}=1.
\end{equation}%
Therefore, a qutrit is an element of $\mathbb{C}^{3}$ and the
corresponding SLOCC group acting on it is $GL_{3}(\mathbb{C})$.
The entanglement measures for qutrit systems are provided by relative
invariants under the action of the multipartite local SLOCC group. For
example, as the simplest relative SLOCC invariant for a two-qutrit state of
the form ($A,B=0,1,2$)%
\begin{equation}
\left\vert \Psi \right\rangle =\Psi _{AB}\left\vert A\right\rangle \otimes
\left\vert B\right\rangle ,
\end{equation}%
one can take the determinant det$(\Psi )$ of the $3\times 3$ complex matrix $%
\Psi _{AB}$, which is invariant under $\SLL(3,\mathbb{C})\otimes \SLL(3,
\mathbb{C})$. The classification of two-qutrit states is thus very simple :
different SLOCC classes are labelled by the rank of $\Psi_{AB}$. An
operator representation of the qutrit density matrix has been developed, and
qutrit entanglement has been studied; see e.g. in \cite{2}. Furthermore, the
generalized concurrence formula as a measure of two qutrits entanglement has
also been introduced \cite{9}.

Nowadays, there is ongoing work to develop quantum computers using not only
qubits, but also qutrits with multiple states \cite{2-wiki}. Indeed, qutrits
exhibit several peculiar and interesting features when used for storing
quantum information; for example, they are more robust to decoherence under
certain environmental interactions \cite{5-wiki}; however, in reality their
direct manipulation might be tricky, and it is thus often easier to use an
entanglement with a qubit \cite{6-wiki}. In \cite{Pal-Vertesi}, the maximum
quantum violation of over 100 tight bipartite Bell inequalities on systems
with up to four-dimensional Hilbert spaces was numerically determined, and
it was found that there exist Bell inequalities that can be violated more
with real qutrits than with complex qubits.

The physical implementation of a qutrit quantum computer in the context of
trapped ions has been studied \cite{7}, and quantum computer simulation
packages for qutrits have been implemented \cite{8}. New studies claim that
qutrits offer a promising path towards extending the frontier of quantum
computers \cite{36,37}. In fact, accessing the third state in current
quantum processors is primarily useful for researchers exploring the cutting
edge of quantum computing, quantum physics \cite{4-article} and those
interested in traditional, qubit-based algorithms alike. Promisingly,
qutrits cannot only increase the amount of information encoded in a single
element, but they can also enable techniques that can dramatically decrease
readout errors \cite{2-article}. Recent progress in world-leading,
cutting-edge quantum computer laboratories has shown how qutrit-qutrit gates
can reduce the cost of decomposing 3-qubit gates into basic 2-qubit
components \cite{3-article}. This is in part due to the much larger state
space accessible using qutrits : for instance, single qutrit operations
pertain to a 3-dimensional Hilbert space, while 2-qutrit
operations live in a 9-dimensional Hilbert space, which is more than twofold
larger than the four-dimensional Hilbert space of the 2-qubit case.

\subsubsection{Neutrinos as qutrits}

Recently, qutrits found an intriguing application in high energy particle
physics : indeed, neutrino flavour states, which are a superposition of
three states, have been characterized in terms of qutrits. This can be done
by mapping the density matrix for neutrinos to a \textit{generalized Poincar%
\'{e} sphere} \cite{2,24}.

The Poincar\'{e} sphere has its origin in optics and is a way of visualizing
different types of polarized light using the mapping from $SU_{2}$ to $S^{3}$%
. A qubit represents a point on the Poincar\'{e} sphere of $SU_{2}$ defined
as the complex projective line%
\begin{equation}
  P^{1}\C=\frac{SU_{2}}{U_{1}}.
\end{equation}%
A generalization of the Poincar\'{e} sphere to $SU_{3}$ can be constructed
\cite{27,28,29}, yielding to the characterization of qutrits as points on
the complex projective plane \cite{30, 3-wiki}%
\begin{equation}
P^{2}\C=\frac{SU_{3}}{U_{2}}.
\end{equation}%
An $n$-qutrit register can represent $3^{n}$ different states
simultaneously, i.e., a superposition state vector in $3^{n}$-dimensional
Hilbert space \cite{4-wiki}. Entanglement of neutrinos realized in terms of
Poincar\'{e} sphere representation for two resp. three-flavour neutrino
states using $SU_{2}$ Pauli matrices resp. $SU_{3}$ Gell-Mann matrices has
been investigated in \cite{Neutr-Qutrits}.

\subsection{Real qutrits: retrits}\label{sec:rq}

\textit{Real} quantum mechanics (that is, quantum mechanics defined over
real vector spaces) dates back to St\"{u}ckelberg \cite{Stueck}, and
provides an interesting theory, whose study may shed some light to
discriminate among the aspects of \textit{quantum entanglement} which are
unique to standard quantum theory and those which are more generic over
other physical theories endowed with this phenomenon \cite{real-QM}. Besides
\textit{rebits} (real qubits), in real quantum mechanics there has been an
interest in introducing \textit{retrits}, i.e. qutrits with \textit{real}
coefficients for probability amplitudes of a three-level system, namely a
three-level quantum state that may be expressed as a \textit{real} linear
combination of the orthonormal basis states $\left\vert 0\right\rangle $, $%
\left\vert 1\right\rangle $ and $\left\vert 2\right\rangle $; such a system
may be in a superposition of three mutually orthonormal quantum states
\cite{1-wiki}, which span the Hilbert space $\mathcal{H}^{3}$ \cite{3-wiki},
with dim$=3$ (over $\mathbb{C}$ for qutrits, or over $\mathbb{R}$ for
retrits). As for rebits, also for retrits the density matrix of the
processed quantum state $\rho $ is real; i.e., at each point in the quantum
computation, it holds that $\left\langle x|\rho |y\right\rangle \in \mathbb{R%
}$, \ for all $\left\vert x\right\rangle $, $\left\vert y\right\rangle $ in
the computational basis.

\subsubsection{Black hole/string charges as retrits}

In recent years, the relevance of rebits in high-energy theoretical physics
was highlighted by the determination of striking multiple relations between
the entanglement of pure states of 2 and 3 qubits and the entropy of extremal black holes
in $D=4$ Maxwell-Einstein-scalar theories, which can be regarded as bosonic
sectors of supergravity theories, or equivalently as low-energy limits of
string theory compactifications. In this framework, which has been
subsequently dubbed \textit{\textquotedblleft black hole / qubit
correspondence\textquotedblright } (see e.g. \cite{Oct, BH-1, BH-2} for
reviews and lists of Refs.), rebits acquire the physical meaning of the
electric and magnetic charges of the extremal black hole, and they linearly
transform under the generalized e.m. duality group $G(\mathbb{R})$ (named
U-duality group in string theory; see further below) of the theory under
consideration\footnote{%
In supergravity, the approximation of \textit{real} (rather than integer)
electric and magnetic charges of the black hole is often considered, thus
disregarding the charge quantization.}.

Thus, a natural question arises out : \textit{can one find entangled systems
of real qutrits that can be related to other entropy formulae?}

The answer is positive, provided one considers black hole and black string
solutions in $D=5$ space-time dimensions \cite{17-rev} : in this sense, the
aforementioned $D=4$ \textit{\textquotedblleft black hole / qubit
correspondence\textquotedblright } can be lifted to a\ $D=5$ \textit{%
\textquotedblleft black hole or black string / qutrit
correspondence\textquotedblright }.

It is here worth recalling that magic $\mathcal{N}=2$, $D=5$ supergravities
\cite{GST1, GST2, Oct} coupled to 5, 8, 14 and 26 vector multiplets, with $G(%
\mathbb{R})=SL(3,\mathbb{R})$, $SL(3,\mathbb{C})_{\mathbb{R}}$, $SU^{\ast
}(6)\simeq SL(3,\mathbb{H})_{\mathbb{R}}$ and $E_{6(-26)}\simeq SL(3,\mathbb{%
O})_{\mathbb{R}}$ \cite{Dray-or-Wangberg} respectively, can be described by
simple, rank-3 Jordan algebras $\mathfrak{J}_{3}^{\mathbb{A}}$ of $3\times 3$
Hermitian matrices with entries taken from the four normed Hurwitz algebras
\cite{Hurwitz}: the reals $\mathbb{R}$, complexes $\mathbb{C}$, quaternions $%
\mathbb{H}$ and octonions $\mathbb{O}$, of total dimension dim$_{\mathbb{R}%
}=3q+3$, with $q:=$dim$_{\mathbb{R}}\mathbb{A}=1,2,4,8$ for $\mathbb{A}=%
\mathbb{R}$, $\mathbb{C},\mathbb{H}$~and $\mathbb{O}$, respectively.

Moreover, one can also replace in these Jordan algebras the division
algebras by their split versions (i.e., split complexes $\mathbb{C}_{s}$,
split quaternions $\mathbb{H}_{s}$ and split octonions $\mathbb{O}_{s}$),
obtaining $\mathfrak{J}_{3}^{\mathbb{A}_{s}}$. For $\mathbb{C}_{s}$ and $%
\mathbb{H}_{s}$, respectively having $G(\mathbb{R})=SL(3,\mathbb{R})\otimes
SL(3,\mathbb{R})$ and $G(\mathbb{R})=SL(6,\mathbb{R})$, one obtains
non-supersymmetric (i.e., $\mathcal{N}=0$) Maxwell-Einstein-scalar theories
in $D=5$, which have been investigated in \cite{MRR1, MRR2}, whereas for $%
\mathbb{O}_{s}$ one obtains the maximally supersymmetric, $\mathcal{N}=8$, $%
D=5$ supergravity \cite{86-rev} with 27 Abelian gauge fields transforming in
the reflexive, real, defining module $\mathbf{27}$ of $G(\mathbb{R}%
)=E_{6(6)}$.

All these theories admit asymptotically flat, static, extremal black holes
and black strings, whose Bekenstein-Hawking entropy can respectively be
expressed in terms of the (absolute value of the) homogeneous cubic
polynomials $\mathbf{I}_{3,el}$ resp. $\mathbf{I}_{3,magn}$, which are
invariant under the non-transitive action of $G(\mathbb{R})$ over the
corresponding simple Jordan algebra $\mathfrak{J}_{3}$ \cite{FGimK}; indeed,
in all mentioned cases, $G$ acting on $\mathfrak{J}_{3}$ is a $\theta $%
-group in the sense of Vinberg \label{Vinberg-Weyl} (see e.g. also \cite%
{Kac-80}, and Refs. therein). On the quantum information theory (QIT) side,
such cubic polynomials describe the measure of a multipartite entanglement
of a suitable system of qutrits; for a comprehensive account, see e.g. Sec.
4 of the review \cite{BH-2}.

\section{From 3 qutrits to 3-centered black holes/strings}\label{sec:BH}

\subsection{\label{Sec-C}Complex orbits: entanglement of 3 qutrits}

In \cite{Nur00} the following $\mathbb{Z}/3\mathbb{Z}$-grading of $\mathfrak{%
e}_{6}(\mathbb{C})$ was considered:
\begin{eqnarray}
\mathfrak{e}_{6}(\mathbb{C}) &=&\mathfrak{g}_{-1}(\mathbb{C})\oplus
\mathfrak{g}_{0}(\mathbb{C})\oplus \mathfrak{g}_{1}(\mathbb{C});  \notag \\
\mathfrak{g}_{0}(\mathbb{C}) &=&\mathfrak{sl}_{3}(\mathbb{C})^{\oplus 3};
\notag \\
\mathfrak{g}_{1}(\mathbb{C}) &=&\mathbb{C}^{3}\otimes \mathbb{C}^{3}\otimes
\mathbb{C}^{3}\equiv \left( \mathbb{C}^{3}\right) ^{\otimes 3};  \notag \\
\mathfrak{g}_{-1}(\mathbb{C}) &=&\mathfrak{g}_{1}^{\ast }(\mathbb{C}),
\label{dec}
\end{eqnarray}%
and the orbit action of the group $G_{0}(\mathbb{C})=\SLL(3,\mathbb{C}%
)^{\times 3}$ acting on $\left( \mathbb{C}^{3}\right) ^{\otimes 3}$ was
completely classified. Note that $\mathfrak{g}_{0}(\mathbb{C})=\mathfrak{sl}%
_{3}(\mathbb{C})^{\oplus 3}$ is a maximal non-symmetric subalgebra of $%
\mathfrak{e}_{6}(\mathbb{C})$, defined as the fixed point algebra with
respect to the action of a periodic authomorphism of order 3, inducing a $%
\mathbb{Z}/3\mathbb{Z}$-grading on $\mathfrak{e}_{6}(\mathbb{C})$ itself.
Moreover, $\SLL(3,\mathbb{C})^{\times 3}$ (non-transitively) acting on $%
\left( \mathbb{C}^{3}\right) ^{\otimes 3}$ is a $\theta $-group of type II
in the sense of Vinberg \cite{Vinberg-Weyl}; indeed, it has a finite number
of nilpotent orbits, and its ring of invariant polynomials has dimension 3,
being finitely generated by the homogeneous polynomials of the integrity
basis, of degree 6, 9 and 12 (resp. named $\mathbf{I}_{6}$, $\mathbf{I}_{9}$
and $\mathbf{I}_{12}$ below; see e.g. Table III of \cite{Kac-80}, and Refs.
therein).

The above grading has application in Quantum Information Theory (QIT),
because it concerns the classification of the entanglement of 3 pure
multipartite qutrits, fitting the space $\left(
\mathbb{C}^{3}\right) ^{\otimes 3}\equiv \left( \mathbf{3},\mathbf{3},%
\mathbf{3}\right) $ (in physicists' notation), under the SLOCC group $\SLL(3,
\mathbb{C})^{\times 3}$. For a recent account on the
classification of the entanglement classes of 3 qutrits, see e.g. \cite%
{Gharahi-Mancini}, and Refs. therein.

\subsection{Real orbits: 3-centered black holes/strings in $D=5$}
\label{sec:bhD5}

The maximally non-compact (i.e., split) real form of the $\mathbb{Z}/3%
\mathbb{Z}$-grading (\ref{dec}) reads%
\begin{eqnarray}
\mathfrak{e}_{6(6)} &=&\mathfrak{g}_{-1}(\mathbb{R})\oplus \mathfrak{g}_{0}(%
\mathbb{R})\oplus \mathfrak{g}_{1}(\mathbb{R});  \notag \\
\mathfrak{g}_{0}(\mathbb{R}) &=&\mathfrak{sl}_{3}(\mathbb{R})^{\oplus 3};
\notag \\
\mathfrak{g}_{1}(\mathbb{R}) &=&\left( \mathbb{R}^{3}\right) ^{\otimes 3};
\notag \\
\mathfrak{g}_{-1}(\mathbb{R}) &=&\mathfrak{g}_{1}^{\prime }(\mathbb{R}).
\label{dec-split}
\end{eqnarray}%
According to the classification carried out in \cite{dGM1}, the split form $%
\mathfrak{e}_{6(6)}$ admits only two real forms of $\mathfrak{a}_{2}^{\oplus
3}$ as maximal (non-symmetric) subalgebras :%
\begin{equation}
\mathfrak{e}_{6(6)}\supsetneq \left\{
\begin{array}{ll}
i: & \mathfrak{sl}_{3}(\mathbb{R})^{\oplus 3}; \\
ii: & \mathfrak{su}_{1,2}\oplus \mathfrak{sl}_{3}(\mathbb{C})_{\mathbb{R}}.%
\end{array}%
\right.  \label{e6(6)}
\end{equation}%
On the other hand, analougous embeddings exist for all other non-compact,
real forms of $\mathfrak{e}_{6}(\mathbb{C})$, namely \cite{dGM1} :%
\begin{eqnarray}
\mathfrak{e}_{6(2)} &\supsetneq &\left\{
\begin{array}{ll}
i: & \mathfrak{su}_{1,2}^{\oplus 3}; \\
ii: & \mathfrak{su}_{1,2}\oplus \mathfrak{su}_{3}^{\oplus 2}; \\
iii: & \mathfrak{sl}_{3}(\mathbb{R})\oplus \mathfrak{sl}_{3}(\mathbb{C})_{%
\mathbb{R}};%
\end{array}%
\right.  \label{e6(2)} \\
\mathfrak{e}_{6(-14)} &\supsetneq &\mathfrak{su}_{1,2}^{\oplus 2}\oplus
\mathfrak{su}_{3};  \label{e6(-14)} \\
\mathfrak{e}_{6(-26)} &\supsetneq &\mathfrak{su}_{3}\oplus \mathfrak{sl}_{3}(%
\mathbb{C})_{\mathbb{R}}.  \label{e6(-26)}
\end{eqnarray}%
As embedding (\ref{e6(6)}) $i$ is related to the $\mathbb{Z}/3\mathbb{Z}$%
-grading (\ref{dec-split}) of $\mathfrak{e}_{6(6)}$, the question presents itself
as to whether the other real
embeddings above are related to suitable $\mathbb{Z}/3\mathbb{Z}$-gradings
of the corresponding non-compact, real forms of $\mathfrak{e}_{6}$. Here we
will not go into this question.

Some observations are in order.

\begin{itemize}
\item In (not necessarily supersymmetric) Maxwell-Einstein-scalar theories,
the embeddings (\ref{e6(6)}) $i$ and (\ref{e6(2)}) $iii$ are named \textit{%
super-Ehlers} \cite{SE} (or \textit{Jordan pairs} \cite{JP}) embeddings, and
they express the fact that the Lie algebra of the $D=3$ electric-magnetic
(e.m.) duality group\footnote{%
This group can be regarded as the \textquotedblleft continuous
limit/version\textquotedblright\ $G\left( \mathbb{R}\right) $ of
the U-duality group $G(\mathbb{Z})$ of superstring theory\cite{HT}.} contains
the Lie algebra of the $D\geqslant 4$ e.m. duality group as a proper
subalgebra, along with a commuting summand $\mathfrak{sl}_{D-2}(\mathbb{R})$,
specified for $D=5$.

\item The embeddings (\ref{e6(6)}) $i$ and $ii$ follow from consistent
truncations of the corresponding Maxwell-Einstein-scalar theories in $D=5$ :
the Lie algebra of the e.m. duality group of the ($\mathfrak{J}_{3}^{\mathbb{%
O}_{s}}$-based) \textit{maximal} supergravity contains (as proper
subalgebras) the Lie algebra of the e.m. duality group of $\mathcal{N}=0$ $%
\mathfrak{J}_{3}^{\mathbb{C}_{s}}$-based Maxwell-Einstein-scalar theory, as
well as of the $\mathcal{N}=2$ $\mathfrak{J}_{3}^{\mathbb{C}}$-based magic
supergravity, along with a commuting summand $\mathfrak{sl}_{3}(\mathbb{R})$
and $\mathfrak{su}_{1,2}$, respectively. These truncations are consequences
of division algebraic reductions (which are also supersymmetry reductions) :
$\mathbb{O}_{s}\rightarrow \mathbb{C}_{s}$ ($\mathcal{N}=8\rightarrow 0$)
and $\mathbb{O}_{s}\rightarrow \mathbb{C}$ ($\mathcal{N}=8\rightarrow 2$),
respectively.

\item On the other hand, the embedding (\ref{e6(-26)}) follows from a
consistent truncation of the corresponding supergravity theories : in $D=5$,
the Lie algebra of the e.m. duality group of the \textit{exceptional} $%
\mathcal{N}=2$, $J_{3}^{\mathbb{O}}$-based magic supergravity \cite{GST1,
GST2} contains (as proper subalgebra) the Lie algebra of the e.m. duality
group of the $\mathcal{N}=2$ $J_{3}^{\mathbb{C}}$-based magic supergravity,
along with a commuting summand $\mathfrak{su}_{3}$. This truncation is a
consequence of the division algebraic reduction $\mathbb{O}\rightarrow
\mathbb{C}$ (which implies no reduction of supersymmetry).\medskip
\end{itemize}

In the present paper, we will study the orbit stratification of the
non-transitive action on $\left( \mathbb{R}^{3}\right) ^{\otimes 3}$ of the
totally split real form $3\mathfrak{sl}_3(\R)$.
This corresponds to the embedding (\ref{e6(6)}) $i$, in which one $\mathfrak{sl}_{3}(\mathbb{R})\equiv \mathfrak{sl}_{3,h}(%
\mathbb{R})$ can be interpreted as the \textquotedblleft
horizontal\textquotedblright\ symmetry \cite{FMOSY} of (asymptotically flat)
3-centered extremal black hole/string solutions in the corresponding
(ungauged) Maxwell-Einstein-scalar (super)gravity theory in $D=5$.

In fact, \textit{multi-centered} solutions (which are a natural
generalization of single-centered solutions) to Maxwell-Einstein equations
exist also in $D>4$ (Lorentzian) space-time dimensions \cite{Lucietti-multi}%
. By considering $D=5$, the asymptotically flat black objects are 0-branes
(black \textit{holes}) and 1-branes (black \textit{strings}); in their
extremal cases, their near-horizon geometries respectively are $%
AdS_{2}\otimes S^{3}$ and $AdS_{3}\otimes S^{2}$. Asymptotically flat,
extremal black holes in $D=5$ are named Tangherlini black holes \cite{T1,T2}%
, and they are a special instance of the Papapetrou-Majumdar black holes,
whose multi-centered classes have been recently investigated in \cite%
{Lucietti-multi}, in which the classification of asymptotically flat,
extremal black hole solutions in Maxwell-Einstein theory in higher
dimensions has been completed.

In $D=5$ the fluxes of the two-form Abelian (electric) field-strengths and
of their dual three-form Abelian (magnetic) field-strengths, which
respectively determine the \textit{electric} charges of the extremal
Tangherlini black hole and the \textit{magnetic} charges of the extremal
black string, fit into a representation $\mathcal{R}_{el}$ resp. into its
dual/conjugate representation $\mathcal{R}_{magn}\equiv \mathcal{R}%
_{el}^{\prime }$ of the $D=5$ e.m. duality group $G\left( \mathbb{R}\right) $. In
the $D=5$ Maxwell-Einstein-scalar theories pertaining to the real embeddings
(\ref{e6(6)}) $i$ resp. (\ref{e6(2)}) $iii$ relevant to our treatment, the
following group-theoretical data can be specified :

\begin{itemize}
\item $\mathcal{N}=0$ Maxwell-Einstein-scalar theory based on $\mathfrak{J}%
_{3}^{\mathbb{C}_{s}}$ : $G\left( \mathbb{R}\right) =\SLL(3,\mathbb{R}%
)\times \SLL(3,\mathbb{R})$, $\mathcal{R}_{el}=\left( \mathbf{3},\mathbf{3}%
^{\prime }\right) $, $\mathcal{R}_{magn}=\left( \mathbf{3}^{\prime },\mathbf{%
3}\right) $, and each orbit of $\SLL(3,\mathbb{R})^{\times 2}$ acting on $%
\mathcal{R}_{el}$ resp. $\mathcal{R}_{magn}$ supports a unique class of
single-centered (electric black hole resp. magnetic black string) solutions
\cite{MRR1,MRR2}.

\item $\mathcal{N}=2$ Maxwell-Einstein magic supergravity based on $%
\mathfrak{J}_{3}^{\mathbb{C}}$ : $G\left( \mathbb{R}\right) =\SLL(3,\mathbb{C%
})_{\mathbb{R}}$, $\mathcal{R}_{el}=\left( \mathbf{3},\overline{\mathbf{3}}%
\right) $, $\mathcal{R}_{magn}=\left( \overline{\mathbf{3}},\mathbf{3}%
\right) $, and each orbit of $\SLL(3,\mathbb{C})_{\mathbb{R}}$ acting on $%
\mathcal{R}_{el}$ resp. $\mathcal{R}_{magn}$ supports a unique class of
single-centered (electric black hole resp. magnetic black string) solutions
\cite{small-orbits}.
\end{itemize}

From classical invariant theory, for both the aforementioned gravity
theories in $D=5$, the action of the e.m. duality group $G\left( \mathbb{%
R}\right) $ on $\mathcal{R}_{el}$ or $\mathcal{R}_{magn}$ can be traced
back, on $\mathbb{C}$, to the action of the $\theta $-group of type I $%
\SLL(3,\mathbb{C})\times \SLL(3,\mathbb{C})$ on its bi-fundamental
representation $\mathbb{C}^{3}\otimes \mathbb{C}^{3}$, characterized by the
following properties (see e.g. \cite{Kac-80}, and Refs. therein):

\begin{description}
\item[1] the identity-connected component $\left( H_{m=1}\right) _{0}$ of
the stabilizer of the generic orbit $\mathcal{O}_{m=1}=\SLL(3,\mathbb{C}%
)^{\times 2}/H_{m=1}$ is given by $\SLL(3,\mathbb{C})$;

\item[2] the number of nilpotent orbits is finite;

\item[3] the ring of invariant polynomials is 1-dimensional, and it is
finitely generated by a cubic homogeneous polynomial $\mathbf{I}_{3}$ (see
e.g. Table II of \cite{Kac-80}). Such a cubic polynomial is nothing but the
cubic norm (determinant) of the rank-3 simple Jordan algebra $\left( J_{3}^{%
\mathbb{C}}\right) _{\mathbb{C}}\simeq \mathbb{C}^{3}\otimes \mathbb{C}^{3}$
of $\SLL(3,\mathbb{C})\times \SLL(3,\mathbb{C})$, which can be regarded as
the \textit{reduced structure group} of $\left( J_{3}^{\mathbb{C}}\right) _{%
\mathbb{C}}$ itself. The same will hold for the relevant forms on $\mathbb{R}$,
\textit{mutatis mutandis}.

\item[3.1] In the $\mathcal{N}=0$ Maxwell-Einstein-scalar theory based on $%
\mathfrak{J}_{3}^{\mathbb{C}_{s}}$ \cite{MRR1,MRR2}: $G\left( \mathbb{R}%
\right) =\SLL(3,\mathbb{R})\times \SLL(3,\mathbb{R})$ acting on $\mathcal{R}%
_{el}=\left( \mathbf{3},\mathbf{3}^{\prime }\right) $ resp. $\mathcal{R}%
_{magn}=\left( \mathbf{3}^{\prime },\mathbf{3}\right) $ admits a unique
primitive invariant cubic homogeneous polynomial : $\mathbf{I}_{3,el}$ resp.
$\mathbf{I}_{3,magn}$, which is nothing but the cubic norm (determinant) of
the rank-3 simple Jordan algebra $J_{3}^{\mathbb{C}_{s}}\simeq \mathbb{R}%
^{3}\otimes \mathbb{R}^{3}$ (resp. its dual) of $\SLL(3,\mathbb{R})\times
\SLL(3,\mathbb{R})$, which can be regarded as the \textit{reduced structure
group} of $J_{3}^{\mathbb{C}_{s}}$ itself. Physically, $\mathbf{I}_{3,el}$
resp. $\mathbf{I}_{3,magn}$ determine the Bekenstein-Hawking entropy of the
extremal, 1-centered, static, asymptotically flat black hole resp. black
string solutions of the theory \cite{MRR1,MRR2}.

\item[3.2] In the $\mathcal{N}=2$ magic Maxwell-Einstein supergravity based
on $\mathfrak{J}_{3}^{\mathbb{C}}$ \cite{GST1, GST2}: $G\left( \mathbb{R}%
\right) =\SLL(3,\mathbb{C})_{\mathbb{R}}$ acting on $\mathcal{R}_{el}=\left(
\mathbf{3},\overline{\mathbf{3}}\right) $ resp. $\mathcal{R}_{magn}=\left(
\overline{\mathbf{3}},\mathbf{3}\right) $ admits a unique primitive
invariant cubic homogeneous polynomial : $\mathbf{I}_{3,el}$ resp. $\mathbf{I%
}_{3,magn}$, which is nothing but the cubic norm (determinant) of the rank-3
simple Jordan algebra $J_{3}^{\mathbb{C}}\simeq \mathbb{R}^{3}\otimes
\mathbb{R}^{3}$ (resp. its dual) of $\SLL(3,\mathbb{C})_{\mathbb{R}}$, which
can be regarded as the \textit{reduced structure group} of $J_{3}^{\mathbb{C}%
}$ itself. Physically, $\mathbf{I}_{3,el}$ resp. $\mathbf{I}_{3,magn}$
determine the Bekenstein-Hawking entropy of the extremal, 1-centered,
static, asymptotically flat black hole resp. black string solutions of the
theory \cite{FG1,FG2}.\medskip
\end{description}

As in $D=4$, a crucial feature of multi-centered solutions is the existence
of a global, \textit{\textquotedblleft horizontal\textquotedblright }
symmetry group\footnote{%
Actually, the \textquotedblleft horizontal\textquotedblright\ symmetry group
is $GL_{m}(\mathbb{R})$, where the additional scale symmetry $SO_{1,1}$ with
respect to $\SLL(m,\mathbb{R})$ is encoded by the homogeneity of the $G(%
\mathbb{R})$-invariant polynomials in the black hole electric (or black
string magnetic) charges. The subscript \textquotedblleft $h$%
\textquotedblright\ stands for \textquotedblleft
horizontal\textquotedblright\ throughout.} $\SLL_{m,h}(\mathbb{R})$ \cite%
{FMOSY}, encoding the combinatoric structure of the $m$-centered solutions
of the theory ($m\in \mathbb{N}$), and commuting with $G\left( \mathbb{R}%
\right) $ itself. Thus, in presence of an asymptotically flat $D=5$
multi-centered\footnote{%
We are not considering the intriguing possibility in which a $m$-centered
solution in $D=5$ is composed by $m^{\prime }$ black holes and $m-m^{\prime }
$ black strings (or the other way around), since we are currently not
knowledgeable whether this can be a well-defined solution of the equations
of motion of the theories under consideration.} black $p$-brane (with $p=0$
resp. $1$) solution with $m$ centers, the dimension $I_{m}$ of the ring of $%
\left( SL_{m,h}(\mathbb{R})\times G\left( \mathbb{R}\right) \right) $%
-invariant homogeneous polynomials constructed with $m$ distinct copies of
the $G\left( \mathbb{R}\right) $-representation $\mathcal{R}_{el}$ resp. $%
\mathcal{R}_{magn}$ (in the defining irrepr. $\mathbf{m}$ of $SL_{m,h}(%
\mathbb{R})$) is given by the general formula \cite{FMOSY}%
\begin{equation}
I_{m}=m\text{dim}_{\mathbb{R}}\mathcal{R}-\text{dim}_{\mathbb{R}}\mathcal{O}%
_{m},  \label{counting}
\end{equation}%
where%
\begin{equation}
\mathcal{O}_{m}:=\frac{SL_{m,h}(\mathbb{R})\times G\left( \mathbb{R}\right)
}{H_{m}\left( \mathbb{R}\right) }
\end{equation}%
is a generally non-symmetric coset describing the generic, open $\left(
SL_{m,h}(\mathbb{R})\times G\left( \mathbb{R}\right) \right) $-orbit,
spanned by the $m$ copies of the representation $\mathcal{R}_{el}$ resp. $%
\mathcal{R}_{magn}$, each pertaining to one center of the multi-centered
solution. By setting $m=1$ resp. $m=3$ in the aforementioned $D=5$
Maxwell-Einstein-scalar theories, one obtains :%
\begin{eqnarray}
\left.
\begin{array}{rr}
\underset{G\left( \mathbb{R}\right) =\SLL(3,\mathbb{R})^{\times 2}}{\mathcal{%
N}=0} & \mathfrak{J}_{3}^{\mathbb{C}_{s}} \\
\underset{G\left( \mathbb{R}\right) =\SLL(3,\mathbb{C})_{\mathbb{R}}}{%
\mathcal{N}=2} & \mathfrak{J}_{3}^{\mathbb{C}}%
\end{array}%
\right\}  &:&%
\begin{array}{l}
I_{m=1}=\text{dim}_{\mathbb{R}}\left( \left( \mathbb{R}^{3}\right) ^{\otimes
2}\right) \mathbf{-}\text{dim}_{\mathbb{R}}\left( \mathcal{O}_{m=1}\right)
=3\cdot 3-8=\underset{\mathbf{I}_{3}}{1}; \\
\\
I_{m=3}=3\text{dim}_{\mathbb{R}}\left( \left( \mathbb{R}^{3}\right)
^{\otimes 2}\right) \mathbf{-}\text{dim}_{\mathbb{R}}\left( \mathcal{O}%
_{m=3}\right) =3\cdot 9-3\cdot 8=\underset{\mathbf{I}_{6},\mathbf{I}_{9},%
\mathbf{I}_{12}}{3},%
\end{array}
\notag \\
&&  \label{ccounting}
\end{eqnarray}%
because%
\begin{eqnarray*}
\text{dim}_{\mathbb{R}}\mathcal{R}_{el} &=&\text{dim}_{\mathbb{R}}\mathcal{R}%
_{magn}=9; \\
\left( H_{m=1}\left( \mathbb{R}\right) \right) _{0} &=&\text{Lie
group~whose~Lie~algebra is a real form of }\mathfrak{a}_{2}\Rightarrow \text{%
dim}_{\mathbb{R}}\mathcal{O}_{m=1}=8; \\
\left( H_{m=3}\left( \mathbb{R}\right) \right) _{0} &=&\mathbb{I~}\text{%
(identity)}\Rightarrow \mathcal{O}_{m=3}\simeq SL_{3,h}(\mathbb{R})\times
G\left( \mathbb{R}\right) \Rightarrow \text{dim}_{\mathbb{R}}\mathcal{O}%
_{m=3}=24.
\end{eqnarray*}%
The counting (\ref{ccounting}) for $m=3$ implies that the ring of $\left(
SL_{3,h}(\mathbb{R})\times G\left( \mathbb{R}\right) \right) $-invariant
homogeneous polynomials built out of three copies of $\mathcal{R}_{el}$($%
=\left( \mathbf{3},\mathbf{3}^{\prime }\right) $ resp. $\left( \mathbf{3},%
\overline{\mathbf{3}}\right) $ for $G\left( \mathbb{R}\right) =\SLL(3,%
\mathbb{R})^{\times 2}$ resp. $\SLL(3,\mathbb{C})_{\mathbb{R}}$) or of $%
\mathcal{R}_{magn}$($=\left( \mathbf{3}^{\prime },\mathbf{3}\right) $ resp. $%
\left( \overline{\mathbf{3}},\mathbf{3}\right) $ for $G\left( \mathbb{R}%
\right) =\SLL(3,\mathbb{R})^{\times 2}$ resp. $\SLL(3,\mathbb{C})_{\mathbb{R}%
}$) has (real) dimension $3$. \ By taking the complexification of all this,
one retrieves the known result from classical invariant theory, namely that $%
\SLL(3,\mathbb{C})^{\times 3}$ non-transitively acting on the
tri-fundamental representation space $\mathbb{C}^{3}\otimes \mathbb{C}%
^{3}\otimes \mathbb{C}^{3}$ is a $\theta $-group of type II in the sense of
Vinberg \cite{Vinberg-Weyl}, and, as anticipated in\ Sec. \ref{Sec-C}, it
has the following properties :

\begin{enumerate}
\item the identity-connected component $\left( H_{m=3}\right) _{0}$ of the
stabilizer of the generic 3-centered orbit%
\begin{equation}
\mathcal{O}_{m=3}:=\frac{SL_{3,h}(\mathbb{R})\times G\left( \mathbb{R}%
\right) }{H_{m=3}\left( \mathbb{R}\right) }
\end{equation}
is nothing but $\mathbb{I}$ (identity);

\item the number of nilpotent orbits is finite;

\item the ring of invariant polynomials is 3-dimensional, and it is finitely
generated by an integrity basis of homogeneous polynomials $\mathbf{I}_{6}$,
$\mathbf{I}_{9}$ and $\mathbf{I}_{12}$ resp. of degree 6, 9, 12 (see e.g.
Table III of \cite{Kac-80}, and Refs. therein).\medskip
\end{enumerate}

Some further observations are in order.

\begin{itemize}
\item For simplicity's sake, let us assume to work on $\mathbb{C}$. Since
the \textquotedblleft horizontal\textquotedblright\ Lie algebra $\mathfrak{a}%
_{2,h}$ stands on a different footing with respect to the Lie algebra $%
\mathfrak{a}_{2}\oplus \mathfrak{a}_{2}$ of the e.m. duality group $G(%
\mathbb{C})=SL(3,\mathbb{C})^{\times 2}$, only invariance with respect to
the proper subgroup Sym$_{2}$ of the permutation group Sym$_{3}$
of the 3 tensor factors in $\mathbb{C}^{3}\otimes \mathbb{C}^{3}\otimes
\mathbb{C}^{3}$ should be taken into account, when considering 3-centered
black hole/string solutions in the $D=5$ Maxwell-Einstein (super)gravity
theories under consideration. In such a way, a classification invariant under%
\begin{equation}
\text{Sym}_{2}\ltimes \left( SL_{3,h}(\mathbb{C})\times \left( \SLL(3,%
\mathbb{C})^{\times 2}\right) \right)
\end{equation}
(namely, Sym$_{2}\ltimes \left( \mathfrak{a}_{2,h}\oplus \mathfrak{a}%
_{2}\oplus \mathfrak{a}_{2}\right) $ at Lie algebra level) should be
considered. Therefore, besides the choice of the ground number field ($%
\mathbb{C}$ in QIT \textit{versus} $\mathbb{R}$ in gravity), we are pointing
out another important difference in the orbit classification procedure :
while in QIT one needs to classify the action of
\begin{equation}
\text{Sym}_{3}\ltimes SL(3,\mathbb{C})^{\times 3}~~\text{on~~}\mathbb{C}%
^{3}\otimes \mathbb{C}^{3}\otimes \mathbb{C}^{3}\equiv \left( \mathbf{3},%
\mathbf{3},\mathbf{3}\right) ,
\end{equation}
in $D=5$ Maxwell-Einstein (super)gravity one needs to classify the action of%
\begin{gather}
\text{Sym}_{2}\ltimes \left( SL_{3,h}(\mathbb{R})\times \left( \SLL(3,%
\mathbb{R})^{\times 2}\right) \right) ~~\text{on~~}\mathbb{R}^{3}\otimes
\mathbb{R}^{3}\otimes \mathbb{R}^{3}\equiv \left( \mathbf{3},\mathbf{3},%
\mathbf{3}^{\prime }\right) \\
\text{resp.}  \notag \\
\text{Sym}_{2}\ltimes \left( SL_{3,h}(\mathbb{R})\times \SLL(3,\mathbb{C})_{%
\mathbb{R}}\right) ~~\text{on~~}\mathbb{R}^{3}\otimes \mathbb{R}^{3}\otimes
\mathbb{R}^{3}\equiv \left( \mathbf{3},\mathbf{3},\overline{\mathbf{3}}%
\right)
\end{gather}
in the $\mathcal{N}=0$, $\mathfrak{J}_{3}^{\mathbb{C}_{s}}$-based theory
\cite{MRR1, MRR2} resp. the $\mathcal{N}=2$, $\mathfrak{J}_{3}^{\mathbb{C}}$%
-based magic supergravity \cite{GST1, GST2} in $D=5$.

\item In the $\mathcal{N}=0$, $\mathfrak{J}_{3}^{\mathbb{C}_{s}}$-based
Maxwell-Einstein theory resp. the $\mathcal{N}=2$, $\mathfrak{J}_{3}^{%
\mathbb{C}}$-based magic supergravity, it holds that $\mathcal{R}%
_{el}=\left( \mathbf{3},\mathbf{3}^{\prime }\right) $ resp. $\left( \mathbf{3%
},\overline{\mathbf{3}}\right) $, and $\mathcal{R}_{magn}=\left( \mathbf{3}%
^{\prime },\mathbf{3}\right) $ resp. $\left( \overline{\mathbf{3}},\mathbf{3}%
\right) $. Therefore, as given above, \textit{modulo the action of
permutation groups}, the whole representation space on which $%
SL_{3,h}(\mathbb{R})\times G(\mathbb{R})$ (non-transitively) acts is $%
\mathbb{R}^{3}\otimes \mathcal{R}_{el}\equiv \left( \mathbf{3},\mathcal{R}%
_{el}\right) $ or $\mathbb{R}^{3}\otimes \mathcal{R}_{magn}\equiv \left(
\mathbf{3},\mathcal{R}_{magn}\right) $. The complexification of all this
yields an interpretation of the representation space $\mathbb{C}^{3}\otimes
\mathbb{C}^{3}\otimes \mathbb{C}^{3}$ as the $\left( \mathbf{3},\mathbf{3},%
\overline{\mathbf{3}}\right) $ or the $\left( \mathbf{3},\overline{\mathbf{3}%
},\mathbf{3}\right) $ of $\SLL(3,\mathbb{C})^{\times 3}$, which is different
from the interpretation of $\mathbb{C}^{3}\otimes \mathbb{C}^{3}\otimes
\mathbb{C}^{3}$ holding in QIT, namely as the $\left( \mathbf{3},\mathbf{3},%
\mathbf{3}\right) $ of the SLOCC group $\SLL(3,\mathbb{C})^{\times 3}$.
However, one can realize that the classification of the orbit action of $%
\SLL(3,\mathbb{C})^{\times 3}$ over $\left( \mathbf{3},\mathbf{3},\mathbf{3}%
\right) $ is the same as the one over $\left( \mathbf{3},\mathbf{3},%
\overline{\mathbf{3}}\right) $, see Section \ref{sec:333bar}.
By switching from $\SLL(3,\mathbb{C})^{\times 3}$ to the
split real form $\SLL(3,\mathbb{R})^{\times 3}$, this directly implies that
the classification of the orbit action of $\SLL(3,\mathbb{R})^{\times 3}$
over $\left( \mathbf{3},\mathbf{3},\mathbf{3}\right) $ is the same as the
one over $\left( \mathbf{3},\mathbf{3},\mathbf{3}^{\prime }\right) $.

\item One may also simply consider the $G\left( \mathbb{R}\right) $%
-invariant (electric resp. magnetic) polynomials constructed with $m$ copies
of $\mathcal{R}_{el}$ resp. $\mathcal{R}_{magn}$, and conceive the
\textquotedblleft horizontal\textquotedblright\ symmetry $SL_{m,h}(\mathbb{R}%
)$ merely as a multiplet-organizing (i.e., spectrum-generating) symmetry.
Setting $m=3$, the counting, rather than the second of (\ref{ccounting}),
now goes this way :%
\begin{equation}
\left.
\begin{array}{rr}
\underset{G\left( \mathbb{R}\right) =\SLL(3,\mathbb{R})^{\times 2}}{\mathcal{%
N}=0} & \mathfrak{J}_{3}^{\mathbb{C}_{s}} \\
\underset{G\left( \mathbb{R}\right) =\SLL(3,\mathbb{C})_{\mathbb{R}}}{%
\mathcal{N}=2} & \mathfrak{J}_{3}^{\mathbb{C}}%
\end{array}%
\right\} \text{:~}\tilde{I}_{m=3}=3\text{dim}_{\mathbb{R}}\left( \left(
\mathbb{R}^{3}\right) ^{\otimes 2}\right) \mathbf{-}\text{dim}_{\mathbb{R}%
}\left( G\left( \mathbb{R}\right) \right) =3\cdot 9-2\cdot 8=11,
\label{counting-3-2}
\end{equation}%
where $\tilde{I}_{m=3}$ denotes the number of $G\left( \mathbb{R}\right) $%
-invariant (electric resp. magnetic) polynomials constructed with 3 copies
of $\mathcal{R}_{el}$ resp. $\mathcal{R}_{magn}$, and we have used that the
generic $G\left( \mathbb{R}\right) $-orbit $\mathbf{O}_{m=3}$ spanned by 3
copies of $\mathcal{R}_{el}$ resp. $\mathcal{R}_{magn}$ has trivial identity
connected component of the stabilizer :
\begin{equation}
\mathbf{O}_{m=3}\simeq G(\mathbb{R})\Rightarrow \text{dim}_{\mathbb{R}%
}\left( \mathbf{O}_{m=3}\right) =\text{dim}_{\mathbb{R}}\left( G\left(
\mathbb{R}\right) \right) =2\cdot 8=16.
\end{equation}%
By taking $Q^{a}\in \mathcal{R}_{el}$, $Q_{a}\in \mathcal{R}_{magn}$, $%
Q^{a\alpha }\in \mathbb{R}^{3}\otimes \mathcal{R}_{el}$ and $Q_{a}^{\alpha
}\in \mathbb{R}^{3}\otimes \mathcal{R}_{magn}$, one can construct the $%
G\left( \mathbb{R}\right) $-invariants (electric and magnetic singlets)
\begin{eqnarray}
\mathbf{I}_{3,el} &:&=\frac{1}{3!}d_{abc}Q^{a}Q^{b}Q^{c}\equiv \mathbf{1}%
_{el}\mathbf{~}\text{of~}G\left( \mathbb{R}\right) ; \\
\mathbf{I}_{3,magn} &:&=\frac{1}{3!}d^{abc}Q_{a}Q_{b}Q_{c}\equiv \mathbf{1}%
_{magn}\text{~of~}G\left( \mathbb{R}\right) ,
\end{eqnarray}%
as well as the $\left( SL_{3,h}(\mathbb{R})\times G\left( \mathbb{R}\right)
\right) $-covariants%
\begin{eqnarray}
\mathbf{I}_{el}^{\alpha \beta \gamma } &:&=\frac{1}{3!}d_{abc}Q^{a\alpha
}Q^{b\beta }Q^{c\gamma }=\mathbf{I}_{el}^{(\alpha \beta \gamma )}\equiv S^{3}%
\mathbb{R}^{3}\equiv \left( \mathbf{10,1}_{el}\right) ~\text{of~}SL_{3,h}(%
\mathbb{R})\times G\left( \mathbb{R}\right) ;  \notag \\
&& \\
\mathbf{I}_{magn}^{\alpha \beta \gamma } &:&=\frac{1}{3!}d^{abc}Q_{a}^{%
\alpha }Q_{b}^{\beta }Q_{c}^{\gamma }=\mathbf{I}_{magn}^{(\alpha \beta
\gamma )}\equiv S^{3}\mathbb{R}^{3}\equiv \left( \mathbf{10,1}_{magn}\right)
~\text{of~}SL_{3,h}(\mathbb{R})\times G\left( \mathbb{R}\right) ,  \notag \\
&&
\end{eqnarray}%
where\footnote{%
From the explicit structure of $\mathcal{R}_{el}$ (resp. $\mathcal{R}_{magn}$%
), the Latin lowercase indices are actually realized as pairs of
contravariant-covariant (resp. covariant-contravariant) fundamental (Greek
lowercase) indices : $A^{a}\equiv A_{~\beta }^{\alpha }$ (resp. $A_{a}\equiv
A_{\alpha }^{~\beta }$). See e.g. \cite{FGimK} and \cite{Dasgupta-Wissanji}.}
$a,b,c=1,...,9$, and $\alpha ,\beta ,\gamma =1,2,3$ (and repeated indices are
summed over (Einstein's convention)). It is thus easy to
realize that the $\tilde{I}_{m=3}=11$ $G\left( \mathbb{R}\right) $-invariant
(electric resp. magnetic) polynomials constructed with 3 copies of $\mathcal{%
R}_{el}$ resp. $\mathcal{R}_{magn}$ organize as $\mathbf{10}\oplus \mathbf{1}
$ in terms of $SL_{3,h}(\mathbb{R})$-representations.\bigskip
\end{itemize}

In conclusion, the classification of the orbit structure of the
non-transitive action of%
\begin{equation}
\text{Sym}_{2}\ltimes \left( SL_{3,h}(\mathbb{R})\times \left( \SLL(3,%
\mathbb{R})^{\times 2}\right) \right) ~~\text{on~~}\left( \mathbf{3},\mathbf{%
3},\mathbf{3}^{\prime }\right) \text{~resp.~}\left( \mathbf{3},\mathbf{3}%
^{\prime },\mathbf{3}\right)
\end{equation}%
concerns the classification of the 3-centered extremal black holes resp.
black strings in the $\mathcal{N}=0$, $\mathfrak{J}_{3}^{\mathbb{C}_{s}}$%
-based Maxwell-Einstein-scalar theory \cite{MRR1, MRR2} in $D=5$.
Analogously, the orbit structure of the non-transitive action of%
\begin{equation}
\text{Sym}_{2}\ltimes \left( SL_{3,h}(\mathbb{R})\times \SLL(3,\mathbb{C})_{%
\mathbb{R}}\right) ~~\text{on~~}\left( \mathbf{3},\mathbf{3},\overline{%
\mathbf{3}}\right) \text{~resp.~}\left( \mathbf{3},\overline{\mathbf{3}},%
\mathbf{3}\right)
\end{equation}%
concerns the classification in the 3-centered extremal black holes resp.
black strings of the $\mathcal{N}=2$, $\mathfrak{J}_{3}^{\mathbb{C}}$-based
magic Maxwell-Einstein supergravity \cite{GST1, GST2} in $D=5$.

The exploitation of the classification of such orbits (which is the object
of this paper) for the study of 3-centered extremal black holes/strings in
the aforementioned Maxwell-Einstein (super)gravity theories in $D=5$
(Lorentzian) space-time dimensions goes beyond the scope of the present
investigation, and we leave it for further future work.

\section{Construction of the representations}\label{sec:E6}

In this section we show how the $\mathrm{SL}(3,\C)^3$-module $(\C^3)^{\otimes 3}$ can
be constructed using a $\Z/3\Z$-grading of the simple Lie algebra of type
$\mathfrak{e}_6$. This construction pertains to Vinberg's theory of
$\theta$-groups. 

Let $\g$ be a complex simple Lie algebra. Let $m\geq 2$ be an integer, and
fix a primitive $m$-th root of unity $\zeta$ in $\C$.
Then the $\Z/m\Z$-gradings
$$\g = \bigoplus_{i\in \Z/m\Z} \g_i$$
(where the $\g_i$ are subspaces such that $[\g_i,\g_j]\subset \g_{i+j}$)
correspond bijectively to the automorphisms of $\g$ of order $m$. Indeed,
if a grading as above is given then by defining $\theta(x) = \zeta^i x$ for
$x\in \g_i$ defines an automorphism of order $m$. Conversely, if $\theta$
is an automorphism of order $m$ then letting $\g_i$ be the eigenspace of
$\theta$ with eigenvalue $\omega^i$ yields a $\Z/m\Z$-grading. We also
remark that given a grading as above, the subalgebra $\g_0$ is reductive
(\cite[Lemma 8.1(c)]{kac}), and that $\g_1$ is naturally a $\g_0$-module. 

The finite order automorphisms have been classified, up to conjugacy in
$\Aut(\g)$, by several authors. One classification is due to Kac; we refer
to \cite[Chapter X, \S 5]{helgason} for a thorough treatment. The inner
automorphisms in this
classification correspond to labelings by non-negative integers of the affine
Dynkin diagram. Here we just consider the case where all labels
are 0 or 1. Let $\Phi$ be the root system of $\g$ with simple system $\Delta$.
Let $\tilde\alpha$ be the highest root of $\Phi$ with respect to the dominance
order induced by $\Delta$. Then the nodes of the affine Dynkin diagram
correspond to the elements of $\widetilde{\Delta}=\Delta\cup\{-\tilde\alpha\}$.
The weights of
the nodes $-\tilde\alpha$ is 1 whereas the weights of the other nodes
are the coefficients of $\tilde\alpha$ with respect to $\Delta$.
Let $\Pi_{0},\Pi_1\subset \widetilde{\Delta}$ be the subsets consisting of the
roots with label 0, 1 respectively. Then we have the following (see
\cite[Chapter 3, \S 3.7]{gov}):
\begin{itemize}
\item the automorphism $\theta$ associated to the labeling has order equal to
  the sum of the weights of the nodes corresponding to $\Pi_0$,
\item $\Pi_0$ generates the  root system of the semisimple part of $\g_0$,
\item the module $\g_1$ has $|\Pi_1|$ irreducible components.
\end{itemize}

Now let $G=\Aut(\g)^\circ$ be the identity component of the automorphism
group of $\g$. The Lie algebra of $G$ is isomorphic to $\g$ (to be precise,
it is equal to $\ad \g \subset \gl(\g)$). Let $G_0$ be the connected subgroup
of $G$ whose
Lie algebra is isomorphic to $\g_0$ (or, equal to $\ad \g_0$). Then $G_0$
naturally acts on $\g_1$. The representation of $G_0$ in
$\mathrm{GL}(\g_1)$ is called
a $\theta$-representation. They were introduced and studied in detail by
Vinberg, \cite{vinberg,vinberg2}. Among other things he developed methods
for classifying the orbits of $G_0$ in $\g_1$.

A first observation here is that $\g_1$ is closed under Jordan decomposition.
Indeed, if $x=x_s+x_n$ is the Jordan decomposition of $x\in \g_1$, then
$\theta(x) = \theta(x_s)+\theta(x_n)$ is the Jordan decomposition of
$\theta(x)$. But also $\theta(x) = \zeta x = \zeta x_s+\zeta x_n$ is the
Jordan decomposition of $\theta(x)$. Hence $x_s,x_n\in\g_1$. This immediately
divides the orbits into three groups:
nilpotent, semisimple and mixed orbits (that consists, respectively, of
nilpotent, semisimple, and mixed (that is, neither nilpotent nor semisimple)
elements). We will briefly indicate the methods for classifying these
orbits in the relevant sections of this paper.

Our main example is the automorphism of the simple Lie algebra
$\g$ of type $\mf{e}_6$ with Kac diagram displayed in Figure \ref{fig:e6}.

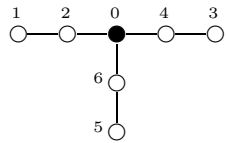
\begin{figure}[htb]\label{fig:e6}
\begin{center}
 \begin{picture}
 (70,55)(0,0)
\setlength{\unitlength}{0.017in}
\put(35,18){\circle{5}} 
\put(35,3){\circle{5}}
\put(35,33){\circle*{5}}
\multiput(5,33)(15,0){5}{\circle{5}}
\multiput(8,33)(15,0){4}{\line(1,0){9}}\multiput(35,6)(0,15){2}{\line(0,1){9}}
\put(3,38){\tiny 1}
\put(18,38){\tiny 2}
\put(33,38){\tiny 0}
\put(48,38){\tiny 4}
\put(63,38){\tiny 3}
\put(28,3){\tiny 5}
\put(28,18){\tiny 6}
\end{picture}\end{center}
\caption{Kac diagram of an automorphism of $\mf{e}_6$}
\end{figure}

Here the black node has label 1 and the other nodes have label 0. From 
what is said above it follows that $\theta$ has order 3, $\g_0$ is
semisimple of type $3\mf{a}_2$, and the $\g_0$-module $\g_1$ is irreducible.

Let $\h$ be a Cartan subalgebra of $\g_0$. Then $\h$ is also a Cartan subalgebra
of $\g$. Let $\Phi$ be the corresponding root system of $\g$. For a root
$\alpha\in \Phi$ we denote the corresponding root space by $\g_\alpha$.
Let $\Delta=\{ \alpha_0, \alpha_1,\alpha_2,\alpha_3,\alpha_4,\alpha_6\}$ be a
set of simple roots of $\Phi$ such that in the corresponding Dynkin diagram
$\alpha_i$ corresponds to node $i$ as in Figure \ref{fig:e6}, and such that
$\g_{\alpha_i}\subset \g_0$ for $i=1,2,3,4,5,6$ and $\g_{\alpha_0}\subset \g_1$.
Accordingly we let $\alpha_5$ denote minus the highest root of $\Phi$.
For $0\leq i\leq 6$ we let $ x_i\in \g_{\alpha_i}$, $ h_i\in \h$,
$ y_i\in \g_{-\alpha_i}$ be ``canonical generators'', that is, letting $C$
denote the
Cartan matrix of the extended Dynkin diagram in Figure \ref{fig:e6}, we have
$$[ h_i, x_j]=C(j,i) x_j,\, [ h_i, y_j] = -C(j,i)  y_j,
[ x_i, y_j] = \delta_{ij}  h_i.$$

By \cite[\S 8]{vinberg}, $x_0$ is a lowest weight vector of the
$\g_0$-module $\g_1$. We have $[ h_i, x_0]=0$ for $i=1,3,5$ and
$[ h_i, x_0]=- x_0$ for $i=2,4,6$. Let $ v_0\in \g_1$ be a
highest weight vector. Then it follows that $[ h_i, v_0]= v_0$
for $i=1,3,5$ and $[ h_i, v_0]=0$ for $i=2,4,6$.

Now we let $\hat \g = \a_1\oplus \a_2\oplus \a_3$, where each $\a_i$ is a
copy of $\sll(3,\C)$. Let $\hat x_{i,j}, \hat h_{i,j}, \hat y_{i,j}$ for $j=1,2$
denote the canonical generators of $\a_i$. Concretely, this means the following.
Denote the $3\times 3$-matrix with a 1 on position $(i,j)$ and zeros
elsewhere by $e_{ij}$. Then
$$\hat x_{i,1} = e_{12},\, \hat x_{i,2} = e_{23},\, \hat h_{i,1} = e_{11}-e_{22},\,
\hat h_{i,2} = e_{22}-e_{33},\, \hat y_{i,1} = e_{21},\, \hat y_{i,2} = e_{32}.$$
Let $V^i$ for $1\leq i\leq 3$ be a copy of $\C^3$. In the sequel we denote
the standard basis of $\C^3$ by $e_0,e_1,e_2$.  Then $\hat \g$ acts on
$\cV=V^1\otimes V^2\otimes V^3$ by
$$(z_1+z_2+z_3) \cdot (v_1\otimes v_2\otimes v_3) =
z_1v_1\otimes v_2\otimes v_3 + v_1\otimes z_2v_2\otimes v_3 +
v_1\otimes v_2\otimes z_3v_3.$$

Then $\cV$ is an irreducible $\hat \g$-module with highest weight vector
$u_0=e_0\otimes e_0\otimes e_0$. We have $\hat h_{i,1} \cdot u_0 = u_0$ and
$\hat h_{i,2}\cdot u_0 = 0$.

Let $\psi : \hat\g \to \g_0$ be the isomorphism mapping $\hat a_{i,1} \mapsto
a_{2i-1}$, $\hat a_{i,2} \mapsto a_{2i}$, where $a\in \{ x,h,y\}$, $1\leq i\leq 3$.
Then $\g_1$ becomes a $\hat \g$-module by $\hat z \cdot x = [\psi(\hat z),x]$.
Comparing highest weights we see that the $\hat \g$-modules $\g_1$ and
$\cV$ are isomorphic, and that there is a unique $\hat \g$-module
isomorphism $\varphi : \g_1 \to \cV$ mapping $v_0\mapsto u_0$. 

Consider the group $\wG = \SLL(3,\C)\times \SLL(3,\C)\times \SLL(3,\C)$.
Then $\Lie(\wG) = \hat \g$. Let $G,G_0$ be as above, that is,
$G=\Aut(\g)^\circ$ and $G_0$ is the connected subgroup corresponding to the
subalgebra $\g_0$. Since $\wG$ is simply connected
we have a surjective homomorphism $\Psi : \wG \to G_0$ whose
differential is $\psi$. This defines a $\widehat G$-action on $\g_1$ by
$\hat g \cdot x = \Psi(\hat g) \cdot x$.
Also $\cV$ is a $\wG$-module by 
$$(g_1\times g_2\times g_3)\cdot (v_1\otimes v_2\otimes v_3) =
g_1v_1\otimes g_2v_2\otimes g_3v_3.$$
Then $\varphi$ is also an isomorphism of $\wG$-modules. It follows
that classifying the $\wG$-orbits in $\cV$ is equivalent to
classifying the $G_0$-orbits in $\g_1$.

\subsection{Permuting the tensor factors}\label{sec:perm}

Consider the vector space $\cV$. Let $\pi\in S_3$ be a permutation of
$\{1,2,3\}$. 
To $\pi$ we associate a linear map (denoted by the same symbol)
$\pi : \cV\to \cV$, $\pi(v_1\otimes v_2\otimes v_3 ) = v_{\pi(1)}\otimes
v_{\pi(2)} \otimes v_{\pi(3)}$. Furthermore we also define the map $\pi :
\wG\to \wG$, $\pi(g_1\times g_2\times g_3) =
g_{\pi(1)}\times g_{\pi(2)}\times g_{\pi(3)}$. Then obviously we have
$$\pi(g)\cdot \pi(v) = \pi(g\cdot v)\text{ for $g\in\wG$, $v\in V$.}$$
It follows that $\wG\cdot \pi(v) = \pi( \wG\cdot v)$. Furthermore,
denoting the stabilizer of $v\in \cV$ in $\wG$ by $Z_{\wG}(v)$ we have
$Z_{\wG}(\pi(v)) = \pi (Z_{\wG}(v))$.

As seen in Section \ref{sec:bhD5}
it is of interest to classify the orbits of $\wG$ in $V$ up to permutation of
the tensor factors, or up to permutation of two of the tensor factors.
In other words, instead of the group $\wG$ we consider the groups
$\mathrm{Sym}_3\ltimes \wG$ and $\mathrm{Sym}_2\ltimes \wG$ (where
$\mathrm{Sym}_2$ permutes the second and third factors only). For the
nilpotent elements the permutation action is given in detail in Table
\ref{CNilp}. For the semisimple and mixed elements see Remarks
\ref{rem:Symsemsim} and \ref{rem:Symnilp} respectively.

\subsection{The action on $V^1\otimes V^2 \otimes (V^3)^*$}\label{sec:333bar}

We can of course also consider the action of $\wG$ on the module
$V^1\otimes V^2 \otimes (V^3)^*$ (where the last factor is the dual module
of $V^3$). Here we argue that we get exactly the same orbits.

Let $\sigma_3 : \wG\to \wG$ be defined
by $\sigma_3(g_1,g_2,g_3) = (g_1,g_2,g_3^{-T})$. Then $\sigma_3$ is an
involution of $\wG$. Define the action $\circ$ of $\wG$ on $\cV$ by
$g\circ v = \sigma_3(g)\cdot v$. The $\wG$-module with this action
is isomorphic to $V^1\otimes V^2\otimes (V^3)^*$.
We have that $g\cdot v = \sigma_3(g)\circ v$. Hence
the orbits of $\wG$ on $\cV$ with respect to the $\cdot$ action, and with
respect to the $\circ$ action are the same. So in this paper we just consider
the $\cdot$ action.

\subsection{Notation}

Throughout the paper we use the \emph{bra and ket notation} (cf.
\cite[\S 5.4.2.1]{wallach}) for the elements of $\cV$, denoting the
elementary tensor $e_{j_1} \otimes e_{j_2} \otimes e_{j_3}$ 
by the symbol $|j_1\,j_2\,j_3 \rangle$.

By $\zeta$ we will denote a fixed third primitive root of unity in $\C$.

In the sequel we will freely use the notation introduced in this section.
In particular, we will use the groups $\wG$, $G$, $G_0$, and the Lie algebras
$\g$, $\g_0$.

\section{Galois cohomology and real orbits}\label{sec:galois}

One of the main goals of this paper is to achieve a classification of the
orbits of $\wG(\R) = \SLL(3,\R)\times \SLL(3,\R)\times \SLL(3,\R)$
on the real space spanned by the elementary tensors $|j_1\,j_2\,j_3 \rangle$
in $\cV$.
Our methods are based on Galois cohomology, to which we here give a
brief introduction.

Let $\sG$ be a subgroup of $\mathrm{GL}(n,\C)$, stable under complex conjugation
$g\mapsto \bar g$ (given by the complex conjugation of the matrix entries).
By $\sG(\R)$ we denote the group of real points,
$$\sG(\R) = \{ g\in \sG \mid \bar g = g\}.$$
We have the following definitions:
\begin{itemize}
\item a $g\in \sG$ is a {\em cocycle} if $\bar{g}g = 1$,
\item two cocycles $g_1,g_2$ are {\em equivalent} if there is an $h\in \sG$ such
  that $h^{-1} g_1 \bar{h} = g_2$,
\item the set of equivalence classes of cocycles is the {\em first Galois
  cohomology set} $\Ho^1 \sG$ of $\sG$.
\end{itemize}

We note that these definitions are an \emph{ad hoc} version of the classical
definition of Galois cohomology of a group, to the special case where the
Galois group has order 2. We refer to the book by Serre \cite{serre}
for a more general account. 

The crucial result for our computations is the following theorem; see
\cite[Section I.5.4]{serre}. 

\begin{theorem}\label{RealOrbits}
Suppose that $\Ho^1 \sG = 1$. Let $\sG$ act on the set $U$. Suppose that
$U$ has an involution $u\mapsto \bar u$ such that  $\overline{g\cdot u} =
\bar{g}\cdot \bar{u}$ for $g\in \sG$, $u\in U$. Let $u_0\in U$ be real (that is,
$\bar u_0 = u_0$) and consider the set of real points of the orbit of $u_0$,
$$\mc{O}(\R) = \{ v\in \sG\cdot u_0 \mid \bar v = v\}.$$
Let $Z_{\sG}(u_0) = \{ g\in \sG\mid g\cdot u_0=u_0\}$ be the stabilizer of $u_0$
in $\sG$. Then there exists a bijection between $\Ho^1 Z_\sG(u_0)$ and the set of
$\sG(\R)$-orbits in $\mc{O}(\R)$. This bijection is described explicitly
as follows. Let $[g]\in \Ho^1 Z_\sG(u_0)$, then there exists $h\in \sG$ such
that $h^{-1}\bar h = g$ (as  $\Ho^1 \sG = 1$). Then the class $[g]$ corresponds
to the $\sG(\R)$-orbit of $(h\cdot u_0)$
\end{theorem}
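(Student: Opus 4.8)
The plan is to verify directly that the explicit assignment in the statement defines a bijection, by checking in turn that it lands in $\mc{O}(\R)$, that it is independent of the auxiliary choices, and that it is injective and surjective. A preliminary remark is that $Z_{\sG}(u_0)$ is stable under the bar involution: if $g\cdot u_0 = u_0$ then $\bar g\cdot u_0 = \bar g\cdot\bar u_0 = \overline{g\cdot u_0} = \bar u_0 = u_0$, using the compatibility $\overline{g\cdot u} = \bar g\cdot\bar u$ and the reality of $u_0$. Hence $\Ho^1 Z_{\sG}(u_0)$ is defined, and its cocycles are \emph{a fortiori} cocycles of $\sG$; the hypothesis $\Ho^1\sG = 1$ is used \emph{only} to guarantee that each such cocycle $g$ can be written $g = h^{-1}\bar h$ with $h\in\sG$, which is what makes the assignment total.

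First I would confirm that $h\cdot u_0\in\mc{O}(\R)$. From $h^{-1}\bar h = g$ we get $\bar h = hg$, and since $g$ fixes $u_0$ we obtain $\overline{h\cdot u_0} = \bar h\cdot u_0 = hg\cdot u_0 = h\cdot u_0$, so $h\cdot u_0$ is indeed a real point of $\sG\cdot u_0$. Next I would establish well-definedness at two levels. If $h_1,h_2$ both solve $h_i^{-1}\bar{h_i} = g$, then $k := h_2 h_1^{-1}$ is real (a one-line check using $\bar{h_i} = h_i g$) and satisfies $k\cdot(h_1\cdot u_0) = h_2\cdot u_0$, so the two candidate representatives lie in a single $\sG(\R)$-orbit. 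If $g$ is replaced by a cohomologous cocycle $z^{-1} g\,\bar z$ with $z\in Z_{\sG}(u_0)$, then $h_1 z$ solves the corresponding equation while $(h_1 z)\cdot u_0 = h_1\cdot u_0$, leaving the orbit unchanged. Thus the rule descends to a well-defined map from $\Ho^1 Z_{\sG}(u_0)$ to the set of $\sG(\R)$-orbits in $\mc{O}(\R)$.

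For surjectivity I would take an arbitrary $v\in\mc{O}(\R)$, write $v = h\cdot u_0$ with $h\in\sG$, and set $g := h^{-1}\bar h$; then $\bar g g = 1$, and reality of $v$ forces $\bar h\cdot u_0 = h\cdot u_0$, so $g\in Z_{\sG}(u_0)$ and $[g]$ maps to the orbit of $v$. For injectivity, suppose $h_1\cdot u_0$ and $h_2\cdot u_0$ lie in one $\sG(\R)$-orbit, say $k\cdot(h_1\cdot u_0) = h_2\cdot u_0$ with $\bar k = k$. Then $z := h_2^{-1} k h_1\in Z_{\sG}(u_0)$, and substituting $h_2 = k h_1 z^{-1}$ into $g_2 = h_2^{-1}\bar{h_2}$ collapses (using $\bar k = k$) to the key identity $g_2 = z\,g_1\,\bar z^{-1}$; writing $w = z^{-1}\in Z_{\sG}(u_0)$ this reads $g_2 = w^{-1} g_1\,\bar w$, so $[g_1] = [g_2]$.

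Every step above is a short manipulation, so I do not expect a genuine structural obstacle; the one thing that demands care is the bookkeeping of the bar-twisted, nonabelian cocycle algebra, where the order of factors and the placement of the involution must be tracked precisely. The conceptual heart is the single identity $g_2 = z\,g_1\,\bar z^{-1}$ of the injectivity step, which simultaneously encodes that two real representatives are $\sG(\R)$-conjugate precisely when their associated cocycles become cohomologous inside $Z_{\sG}(u_0)$; once this is in hand the bijection and its stated inverse are immediate.
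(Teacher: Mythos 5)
Your proof is correct; I checked each cocycle manipulation (reality of $h\cdot u_0$, independence of the choice of $h$ and of the representative of $[g]$, surjectivity, and the key identity $g_2=z\,g_1\,\bar z^{-1}$ for injectivity) and they all hold, with the bar-stability of $Z_{\sG}(u_0)$ correctly established at the outset so that $\Ho^1 Z_{\sG}(u_0)$ makes sense. However, your route differs from the paper's: the paper does not prove this theorem at all, but cites it as a known result from Serre's book (Section I.5.4 of \emph{Galois Cohomology}), where it is obtained from the general twisting formalism and the exact sequence of pointed sets attached to the inclusion $Z_{\sG}(u_0)\subset\sG$; in that framework the $\sG(\R)$-orbits in $\mc{O}(\R)$ correspond to the \emph{kernel} of the natural map $\Ho^1 Z_{\sG}(u_0)\to\Ho^1\sG$, and the hypothesis $\Ho^1\sG=1$ simply makes this kernel the whole of $\Ho^1 Z_{\sG}(u_0)$. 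Your direct verification buys a self-contained, elementary proof tailored to the order-two Galois group and to the explicit description of the bijection that the paper actually uses in its computations; the citation approach buys generality (arbitrary profinite Galois groups, and a meaningful statement even when $\Ho^1\sG\neq 1$) and situates the result in the standard machinery. One small observation worth recording, which your own argument makes visible: the hypothesis $\Ho^1\sG=1$ is used only to guarantee the existence of $h$ with $h^{-1}\bar h=g$, i.e.\ totality of the assignment; well-definedness, injectivity, and surjectivity are unconditional, which is precisely why dropping the hypothesis yields the kernel statement in Serre's formulation.
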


In order to use this we need to be able to compute the first Galois cohomology
set of matrix groups. This can be done by computer. However, we also use
some criteria to derive the cohomology sets directly. 

\begin{crit}\label{Hvan:torus}
We have that $\Ho^1 \mathrm{GL}(n,\C)$ and $\Ho^1 \SLL(n,\C)$ are both trivial.
So in particular $\Ho^1 T$ is trivial, where $T\subset \mathrm{GL}(n,\C)$ is a torus.  
\end{crit}

The next criterion is \cite[Corollary 3.2.2]{BorovoideGraafLe}.

\begin{crit}\label{Hvan:oddorder}
If $\sG$ is finite, abelian and of odd order then  $\Ho^1 \sG$ is trivial.
\end{crit}

\begin{crit}\label{crt:pgroup}
Let $\sG$ be a group of order $p^m$, where $p$ is an odd prime. Then
$\Ho^1 \sG$ is trivial.
\end{crit}

This follows from \cite[Lemma 3.2.3]{BorovoideGraafLe}.
The next criterion is \cite[Lemma 3.2.6]{BorovoideGraafLe}.

\begin{crit}\label{H:generators}
If $\sG$ is finite of order $2p^m$ where $p$ is an odd prime, 
then $\Ho^1 \sG=\{1, [c]\}$, where $c$ is a real element of order 2.
\end{crit}

The next criterion is \cite[Proposition 3.3.16]{BorovoideGraafLe}.

\begin{crit}\label{crit:alg1}
Let $\sG$ be a linear algebraic group with identity component $\sG^\circ$.
Suppose that $\sG/\sG^\circ$ is of order $p^n$ for some odd prime $p$ and
$n\geq 0$ and $|\Ho^1 \sG^\circ | <p$. Then the canonical map
$\Ho^1 \sG^\circ \to \Ho^1 \sG$ is bijective.
\end{crit}

\section{Nilpotent orbits}\label{sec:nil}

In this section we comment on the methods for classifying complex and
real nilpotent orbits. We present the complex classification of Nurmiev,
\cite{Nurmiev} to which we add some data (see below). We also classify the
real nilpotent orbits.

From \cite[\S 4.3]{BorovoideGraafLe} we recall some general results.
Let $\a = \oplus_{i\in \Z/m\Z} \a_i$ be a $\Z/m\Z$-graded semisimple Lie algebra
over a field of characteristic 0. Let $H$ be a group of automorpisms of $\a$
such that each element preserves the spaces $\a_i$ for $i\in \Z/m\Z$, and such
that $H$ contains $\exp \ad x$ for all nilpotent $x\in \a_0$. Then
\begin{itemize}
\item A nilpotent $e\in \a_1$ lies in a {\em homogeneous} $\sll_2$-triple
  $(h,e,f)$. This means that $h\in \a_0$, $f\in\a_{-1}$ and
  $$[h,e]=2e,\, [h,f]=-2f,\, [e,f]=h.$$
\item Two nilpotent elements $e,e'\in \a_0$, lying in homogeneous
  $\sll_2$-triples  $(h,e,f)$, $(h',e',f')$, are $H$-conjugate if and only if
  there is a $\sigma\in H$ with $\sigma(h)=h'$, $\sigma(e)=e'$, $\sigma(f)=f'$.
\end{itemize}  

\subsection{The complex nilpotent orbits}

Let $\a = \oplus_{i\in \Z/m\Z} \a_i$ be a $\Z/m\Z$-graded semisimple Lie algebra
over $\C$. Let $A_0$ be the connected subgroup of $\mathrm{Aut}(\a)$ with
Lie algebra $\ad_{\a} \a_0$. Then $A_0$ satisfies the requirements on $H$ above.
Moreover, as the base field is algebraically closed, it can be shown that if
we have two homogeneous
$\sll_2$-triples $(h,e,f)$, $(h',e',f')$ then $e,e'$ are $A_0$-conjugate if
and only if $h,h'$ are $A_0$-conjugate (cf., \cite[Theorem 8.3.6]{graaf17}).
Let $\h_0$ be a fixed Cartan subalgebra of $\a_0$. 
It follows that each nilpotent orbit has a representative $e$ lying in a
homogeneous $\sll_2$-triple $(h,e,f)$ such that $h\in \h_0$. We also have that
$h,h'\in\h_0$ are $A_0$-conjugate if and only if they are conjugate under
the Weyl group $N_{A_0}(\h_0)/Z_{A_0}(\h_0)$. This Weyl group is canonically
isomorphic to the Weyl group of the root system of $\a_0$ with respect to
$\h_0$. These facts can be used to devise an algorithm to classify the
nilpotent $A_0$-orbits in $\a_0$ (\cite[\S 8.4.1]{graaf17}). An alternative
is Vinberg's support method, \cite{vinberg2}, \cite[\S 8.4.2]{graaf17}.

In our case, that is, with $\g$, $\g_0$, $G$, $G_0$ as in Section
\ref{sec:E6}, there are 63 nonzero nilpotent orbits.
Nurmiev \cite{Nurmiev} lists them
up to permutation of the tensor factors (Section \ref{sec:perm}).
We have used the above indicated algorithms,
and their implementation in the {\sf SLA} (\cite{sla}) package of
{\sf GAP}4 (\cite{GAP4}), to check
the correctness of his list. We reproduce it here in Table \ref{CNilp}, where we
also add the permutations that can be used to determine representatives
of the orbits that are not in Nurmiev's list. 
The second column of this table has the same representatives as Nurmiev,
except that we use the qutrit notation. In the third column we indicate a
characteristic of the orbit by a list of integers. If $e$ is a representative
of a nilpotent orbit, lying in the homogeneous $\sll_2$-triple $(h,e,f)$,
then the element $h$ is called a characteristic of the orbit. The integers
in the third column are to be interpreted as
follows. Let $\alpha_1,\ldots,\alpha_6$ be a fixed set of simple roots of
$\g_0$ with respect to the fixed Cartan subalgebra $\h_0$. Then by the above
considerations it follows that each nilpotent orbit has a representative
$e$ lying in a homogeneous $\sll_2$-triple $(h,e,f)$ with $\alpha_i(h) \geq 0$
for $1\leq i\leq 6$. The values $\alpha_i(h)$ are given in the third column.
We remark that Nurmiev uses the action of the symmetric group $S_3$:
if two orbits are carried to each other via a permutation of the tensor factors,
then only one of them is listed. We do the same, except that in the fourth
column we list the permutations that have
to be applied to the listed orbits to obtain the other orbits. Thus the
total number of nilpotent orbits is 62 (excluding 0).
The fifth and sixth columns are devoted to the stablizer
$$Z_{\wG}(h,e,f) = \{ g\in \wG \mid g\cdot h=h,\,g\cdot e=e,\, g\cdot f=f\}$$
of a homogeneous $\sll_2$-triple $(h,e,f)$ containing a representative $e$ of
the given nilpotent orbit. The fifth column has a dscription of the
identity component, and the sixth column has a description of the
component group. These stablizers have been determined by computing explicit
sets of polynomial equations describing them, together with computational 
methods based on Gr\"obner bases. For this we have used the computer algebra
systems {\sf GAP}4 and {\sc Singular} (\cite{DGPS}). 

\begin{longtable}[l]{|l|c|c|c|c|c|}
\caption{Nilpotent Complex 3-qutrits}
\label{CNilp}\\
\hline
$N$&3-qtrit&Char. & $\sigma$& $ Z_{\wG}(h,e)^\circ$ & $K$ \\
\hline
1&$\se{012}+\se{021}+\se{102}+\se{111}+\se{120}+\se{200}$&$\text{\tiny6\, 12\, 6\, 6\, 6\, 6} $&\tiny$\id$&\tiny$\id$&\tiny$(\Ztre)^2$ \\
&&$ \text{\tiny6\, 6\, 6\, 6\, 6\, 12 }$&\tiny$(12)$& &\\
&&$\text{\tiny 6\, 6\, 6\, 12\, 6\, 6 }$&\tiny$(132)$& &\\
\hline
2&$\se{012}+\se{021}+\se{102}+\se{110}+\se{111}+\se{200}$&$ \text{\tiny 6\, 6\, 0\, 6\, 6\, 6 } $&\tiny$\id$&\tiny$\id$&\tiny$(\Ztre)^2$ \\
&&$ \text{\tiny 6\, 6\, 6\, 6\, 0\, 6 }$&\tiny$(23)$& &\\
&&$ \text{\tiny   0\, 6\, 6\, 6\, 6\, 6 }$&\tiny$(123)$& &\\
\hline
3&$\se{002}+\se{011}+\se{020}+\se{101}+\se{112}+\se{200}$&$\text{\tiny 0\, 6\, 6\, 0\, 0\, 6 } $&\tiny$\id$&\tiny$\id$&\tiny $(\Ztre)^2 \times \Z / 2\Z$\\
&&$ \text{\tiny0\, 6\, 0\, 6\, 6\, 0 } $&\tiny$(23)$& &\\
&&$ \text{\tiny  6\, 0\, 0\, 6\, 0\, 6} $&\tiny$(123)$& &\\
\hline
4&$\se{002}+\se{011}+\se{101}+\se{110}+\se{220}$&$\text{\tiny 6\, 0\, 6\, 6\, 6\, 0 } $&\tiny$\id$&\tiny$T_1$&\tiny$\Ztre$ \\
&&$ \text{\tiny6\, 0\, 6\, 0\, 6\, 6 } $&\tiny$(23)$& &\\
&&$ \text{\tiny 6\, 6\, 6\, 0\, 6\, 0 }$&\tiny$(123)$& &\\
\hline
5&$\se{002}+\se{020}+\se{021}+\se{110}+\se{201}$&$ \text{\tiny3\, 3\, 0\, 6\, 3\, 3}  $&\tiny$\id$&\tiny$T_1$&\tiny$ \Ztre$ \\
&&$\text{\tiny 3\, 3\, 3\, 3\, 0\, 6  }$&\tiny$(23)$& &\\
&&$ \text{\tiny 0\, 6\, 3\, 3\, 3\, 3 }$&\tiny$(123)$&& \\
\hline
6&$\se{002}+\se{011}+\se{101}+\se{120}+\se{210}$&$ \text{\tiny 1\, 5\, 6\, 1\, 1\, 5} $&\tiny$\id$&\tiny$T_1 $&\tiny$\Ztre$ \\
&&$ \text{\tiny1\, 5\, 1\, 5\, 6\, 1 } $&\tiny$(23)$& &\\
&&$\text{\tiny 6\, 1\, 1\, 5\, 1\, 5 } $&\tiny$(123)$& &\\
\hline
7&$\se{002}+\se{011}+\se{020}+\se{101}+\se{210}$&$  \text{\tiny 3\, 0\, 3\, 3\, 3\, 3} $&\tiny$\id$&\tiny$T_1 $&\tiny$ \Ztre$ \\
&&$ \text{\tiny 3\, 3\, 3\, 3\, 3\, 0 }$&\tiny$(12)$& &\\
&&$ \text{\tiny 3\, 3\, 3\, 0\, 3\, 3 }$&\tiny$(132)$& &\\
\hline
8&$\se{002}+\se{020}+\se{111}+\se{200}$&$ \text{\tiny 2\, 4\, 2\, 4\, 2\, 4} $&\tiny$\id$&\tiny$ T_2 $ &\tiny $\id$\\
\hline
9&$\se{000}+\se{011}+\se{111}+\se{122}$&$ \text{\tiny0\, 6\, 0\, 0\, 0\, 0}  $&\tiny$\id$&\tiny$T_2 $&\tiny$S_3 \times \Ztre$ \\
&&$ \text{\tiny 0\, 0\, 0\, 0\, 0\, 6 }$&\tiny$(12)$& &\\
&&$  \text{\tiny 0\, 0\, 0\, 6\, 0\, 0 } $&\tiny$(132)$& &\\
\hline
10&$\se{002}+\se{011}+\se{020}+\se{101}+\se{110}+\se{200}$&$ \text{\tiny 2\, 2\, 2\, 2\, 2\, 2 }$&\tiny$\id$&\tiny$\id$&\tiny$(\Ztre)^3$ \\
\hline
11&$\se{002}+\se{020}+\se{101}+\se{210}$&$ \text{\tiny2\, 0\, 4\, 2\, 4\, 2 } $&\tiny$\id$&\tiny$ T_2 $&\tiny$\id$ \\
&&$ \text{\tiny 4\, 2\, 4\, 2\, 2\, 0 }$&\tiny$(12)$& &\\
&&$ \text{\tiny 4\, 2\, 2\, 0\, 4\, 2 }$&\tiny$(132)$& &\\
\hline
12&$\se{002}+\se{020}+\se{100}+\se{111}$&$ \text{\tiny1\, 5\, 0\, 1\, 0\, 1 } $&\tiny$\id$&\tiny$ T_2 $&\tiny$ \Ztre $ \\
&&$\text{\tiny  0\, 1\, 0\, 1\, 1\, 5 }$&\tiny$(12)$&& \\
&&$\text{\tiny 0\, 1\, 1\, 5\, 0\, 1}$&\tiny$(132)$&& \\
\hline
13&\tiny$\se{002}+\se{011}+\se{020}+\se{101}+\se{110}$&$ \text{\tiny 1\, 4\, 1\, 1\, 1\, 1}  $&\tiny$\id$&\tiny$ T_1 $&\tiny$(\Ztre)^2$ \\
&&$ \text{\tiny1\, 1\, 1\, 1\, 1\, 4  }$&\tiny$(12)$&& \\
&&$ \text{\tiny 1\, 1\, 1\, 4\, 1\, 1 }$&\tiny$(132)$& &\\
\hline
14&$\se{002}+\se{010}+\se{021}+\se{100}+\se{201}$&$ \text{\tiny 3\, 0\, 0\, 3\, 3\, 0} $&\tiny$\id$&\tiny$\SL $&\tiny$(\Ztre)^2$ \\
&&$ \text{\tiny3\, 0\, 3\, 0\, 0\, 3}  $&\tiny$(23)$& &\\
&&$ \text{\tiny 0\, 3\, 3\, 0\, 3\, 0 }$&\tiny$(123)$& &\\
\hline
15&$\se{011}+\se{022}+\se{100}$&$ \text{\tiny 2\, 4\, 1\, 0\, 1\, 0 }$&\tiny$\id$&\tiny$\GL \times T_1$&\tiny$\id$ \\
&&$ \text{\tiny1\, 0\, 1\, 0\, 2\, 4 } $&\tiny$(12)$& &\\
&&$  \text{\tiny1\, 0\, 2\, 4\, 1\, 0 }$&\tiny$(132)$& &\\
\hline
16&$\se{002}+\se{011}+\se{020}+\se{100}$&$ \text{\tiny 2\, 2\, 1\, 1\, 1\, 1}  $&\tiny$\id$&\tiny$T_2 $&\tiny$ \Ztre$ \\
&&$ \text{\tiny 1\, 1\, 1\, 1\, 2\, 2 }$&\tiny$(12)$& &\\
&&$ \text{\tiny  1\, 1\, 2\, 2\, 1\, 1} $&\tiny$(132)$& &\\
\hline
17&$\se{001}+\se{010}+\se{102}+\se{120}$&$ \text{\tiny 0\, 4\, 2\, 0\, 2\, 0} $&$\id$&\tiny$\GL $&\tiny$\Ztre$ \\
&&$ \text{\tiny 2\, 0\, 2\, 0\, 0\, 4 }$&\tiny$(12)$& &\\
&&$ \text{\tiny 2\, 0\, 0\, 4\, 2\, 0 }$&\tiny$(132)$& &\\
\hline
18&$\se{000}+\se{011}+\se{101}+\se{112}$&$ \text{\tiny0\, 3\, 0\, 0\, 0\, 3 } $&\tiny$\id$&\tiny$\GL$&\tiny$\Ztre$ \\
&&$\text{\tiny 0\, 3\, 0\, 3\, 0\, 0 } $&\tiny$(23)$& &\\
&&$ \text{\tiny 0\, 0\, 0\, 3\, 0\, 3 }$&\tiny$(123)$& &\\
\hline
19&$\se{002}+\se{010}+\se{101}$&$\text{\tiny 1\, 2\, 0\, 1\, 1\, 2 } $&\tiny$\id$&\tiny $T_3$&\tiny$\id$\\
&&$ \text{\tiny  1\, 2\, 1\, 2\, 0\, 1} $&\tiny$(23)$& &\\
&&$ \text{\tiny 0\, 1\, 1\, 2\, 1\, 2 }$&\tiny$(123)$& &\\
\hline
20&$\se{000}+\se{111}$&$\text{\tiny 0\, 2\, 0\, 2\, 0\, 2 } $&\tiny$\id$&\tiny $T_4$&\tiny$ \Z / 2\Z$\\
\hline
21&$\se{001}+\se{010}+\se{100}$&$ \text{\tiny 1\, 1\, 1\, 1\, 1\, 1} $&\tiny$\id$&\tiny$T_3$&\tiny$\id$ \\
\hline
22&$\se{000}+\se{011}+\se{022}$&$ \text{\tiny3\, 0\, 0\, 0\, 0\, 0 } $&\tiny$\id$&\tiny $\SL \times \SLL(3,\C) $&\tiny$(\Ztre)^2$\\
&&$ \text{\tiny0\, 0\, 0\, 0\, 3\, 0 } $&\tiny$(12)$& &\\
&&$ \text{\tiny 0\, 0\, 3\, 0\, 0\, 0 }$&\tiny$(132)$& &\\
\hline
23&$\se{000}+\se{011}$&$\text{\tiny 2\, 0\, 0\, 1\, 0\, 1}  $&\tiny$\id$&\tiny$(\GL)^2$&\tiny$\id$ \\
&&$ \text{\tiny 0\, 1\, 0\, 1\, 2\, 0} $&\tiny$(12)$& &\\
&&$  \text{\tiny0\, 1\, 2\, 0\, 0\, 1} $&\tiny$(132)$& &\\
\hline
24&$\se{000}$&$ \text{\tiny1\, 0\, 1\, 0\, 1\, 0 } $&\tiny$\id$&\tiny $(\GL)^2 \times \SL$ &\tiny$\id$\\
\hline
\end{longtable}

\subsection{The real nilpotent orbits}

Let $T$ denote the set of complex homogeneous $\sll_2$-triples in $\g$.
Then $\wG$ acts on $T$. Furthermore, $T$ is closed under complex
conjugation,
and the triples fixed under complex conjugation form the set $T^\R$ of real
homogeneous $\sll_2$-triples. The group $\wG(\R)$ acts on $T^\R$ and
as seen
at the beginning of the section, the nilpotent $\wG(\R)$-orbits are in
bijection with the $\wG(\R)$-orbits in $T^\R$. Since
$\Ho^1 \wG=1$, Theorem \ref{RealOrbits} implies the following.

\begin{theorem}\label{NilpCent}
Let $e \in \g_1$ be a nilpotent element and $(h,e,f)$ an associated
homogeneous $\sll_2$-triple. Let $\mathcal{O}=\wG\cdot e$ and
$\mathcal{O}(\R) = \{ x\in \mathcal{O}\mid \bar x =x\}$.
There is a bijection between
$\wG(\R)$-orbits in $\mathcal{O}(\R)$ and $\Ho^1(Z_{\wG}(h,e,f))$.
\end{theorem}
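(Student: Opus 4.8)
The plan is to deduce the statement from Theorem \ref{RealOrbits}, applied not to the element $e$ but to the associated homogeneous $\sll_2$-triple, and then to transport the conclusion back to nilpotent elements via the triple correspondence recalled at the beginning of Section \ref{sec:nil}. Two ingredients are needed: a bijection between $\wG(\R)$-orbits of real nilpotents in $\mc{O}(\R)$ and $\wG(\R)$-orbits of the real homogeneous triples lying over them, and the cohomological count governing the real forms of a single triple orbit. First I would record the hypothesis of Theorem \ref{RealOrbits}, namely $\Ho^1 \wG = 1$: since $\wG = \SLL(3,\C)\times\SLL(3,\C)\times\SLL(3,\C)$ and $\Ho^1$ of a direct product (with componentwise conjugation) is the product of the $\Ho^1$'s, this follows from $\Ho^1\SLL(3,\C)=1$ (Criterion \ref{Hvan:torus}).

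For the first ingredient I would take $H=\wG(\R)$, viewed through $\Psi$ as a group of automorphisms of $\g$. It preserves each graded piece $\g_i$, and for every real nilpotent $x\in\g_0$ it contains $\exp\ad x$: indeed $x=\psi(\hat x)$ for a real nilpotent $\hat x$ in the Lie algebra $\sll(3,\R)^{\oplus 3}$ of $\wG(\R)$, and $\Psi(\exp\hat x)=\exp\ad x$ with $\exp\hat x\in\wG(\R)$. Hence $H$ satisfies the hypotheses of the results quoted from \cite[\S 4.3]{BorovoideGraafLe}: every real nilpotent in $\g_1$ lies in a real homogeneous $\sll_2$-triple, and two such nilpotents are $\wG(\R)$-conjugate if and only if their triples are. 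Passing to the middle term of a triple, this yields a bijection between the $\wG(\R)$-orbits in $\mc{O}(\R)$ and the $\wG(\R)$-orbits of real homogeneous triples in the $\wG$-orbit of $(h,e,f)$, that is, in $T^\R\cap\wG\cdot(h,e,f)$.

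For the second ingredient I would apply Theorem \ref{RealOrbits} with $\sG=\wG$, with $U=T$ the set of complex homogeneous $\sll_2$-triples, with the involution on $T$ given by entrywise complex conjugation, and with $u_0=(h,e,f)$, which is real by hypothesis. The required compatibility $\overline{g\cdot u}=\bar g\cdot\bar u$ holds because the action of $\wG$ on $\g$ is defined over $\R$: the canonical generators, the isomorphism $\psi$, and the map $\Psi$ are all chosen with respect to the split real structure of (\ref{dec-split}), so conjugation on $\wG$ and on $\g$ intertwine. The stabilizer of $u_0$ under this action is exactly $Z_{\wG}(h,e,f)$, so Theorem \ref{RealOrbits} produces a bijection between $\Ho^1 Z_{\wG}(h,e,f)$ and the $\wG(\R)$-orbits in $T^\R\cap\wG\cdot(h,e,f)$. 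Composing with the bijection of the previous paragraph gives the asserted bijection.

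The triviality of $\Ho^1\wG$ and the identification of the stabilizer are routine. I expect the main point to verify carefully to be the compatibility of the two real structures: that complex conjugation on the matrix group $\wG=\SLL(3,\C)^3$ corresponds, under $\Psi$, to the conjugation on $\g$ fixing the split form $\mf{e}_{6(6)}$ and stabilizing the grading (\ref{dec-split}). This equivariance is what makes both the triple correspondence over $\R$ and the application of Theorem \ref{RealOrbits} legitimate; the accompanying graded real Jacobson--Morozov statement (existence of a real homogeneous triple through each real nilpotent) is the other place where the real structure enters, and it is supplied by \cite[\S 4.3]{BorovoideGraafLe}.
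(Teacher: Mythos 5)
Your proposal is correct and follows essentially the same route as the paper: the paper likewise combines the triple correspondence from \cite[\S 4.3]{BorovoideGraafLe} (nilpotent $\wG(\R)$-orbits in bijection with $\wG(\R)$-orbits of real homogeneous $\sll_2$-triples) with an application of Theorem \ref{RealOrbits} to the $\wG$-action on the set $T$ of complex homogeneous triples, using $\Ho^1\wG=1$ and the stabilizer $Z_{\wG}(h,e,f)$. Your additional verifications (the $\exp\ad x$ hypothesis for $H=\wG(\R)$, the product formula for $\Ho^1\wG$, and the compatibility of the split real structure on $\g$ with entrywise conjugation on $\SLL(3,\C)^3$) are details the paper leaves implicit, and they check out.
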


Among the centralizers in Table \ref{CNilp}, only three cases have non
trivial Galois cohomology. In order to compute generators of first cohomology
set, we need a more explicit description of their stabilizers, which in
all cases we denote $Z$. 

\begin{enumerate}
\item[3] $\e{002}+\e{011}+\e{020}+\e{101}+\e{112}+\e{200}$: the stabilizer is isomorphic to $\Ztre \times \Ztre \times \Z / 2\Z$ and consists of the elements
  $$\diag(\delta^2\zeta^2,\epsilon\delta^2\zeta^2,\epsilon\delta^2\zeta^2)
  \times \diag(\delta,\epsilon\delta,\epsilon\delta) \times
  \diag(\epsilon\zeta,\epsilon\zeta,\zeta),$$
  where $\zeta^3=\delta^3=1$ and $\epsilon^2=1$. By Criterion \ref{H:generators}
  it follows that the $\Ho^1 Z$ consists of $[1]$ and $[c]$ where
  $$c=\diag(1,-1,-1)\times\diag(1,-1,-1)\times\diag(-1,-1,1).$$
  An element $g\in \mathrm{SL}_3(\C)^3$ with $g^{-1} \bar{g} = c$ is
  $$g = \diag(-1,i,i)\times \diag(-1,i,i)\times \diag(i,i,-1).$$
  The corresponding orbit representative is
  $$-\e{002}+\e{011}+\e{020}+\e{101}+\e{112}+\e{200}.$$
\item[9]  $\e{000}+\e{011}+\e{111}+\e{122}$: the identity component of the
  stabilizer is a 2-dimensional torus, and the component group is isomorphic
  to $S_3 \times \Ztre$. The identity component $T_2$ consists of the
  elements $\diag(1,1,1)\times \diag(s^{-1},t^{-1},st)\times\diag(s, t,
  s^{-1}t^{-1})$. The component group is generated by $\diag(\zeta^2,\zeta^2,
  \zeta^2) \times \diag(\zeta,\zeta,\zeta) \times \diag(1,1,1)$ and
  $$g_1 = \begin{pmatrix} -1&0&0\\ -1&1&0\\ 0&0&-1 \end{pmatrix}\times
  \begin{pmatrix} 0&-1&0\\ -1&0&0\\ 0&0&-1\end{pmatrix}\times\\
    \begin{pmatrix} 0&1&0\\ 1&0&0\\ 0&0&-1\end{pmatrix},$$
  $$g_2 =  \begin{pmatrix} 0&-1&0&\\ 1&-1&0\\ 0&0&1 \end{pmatrix}\times
  \begin{pmatrix} 0&-1&0\\ 0&0&-1\\ 1&0&0\end{pmatrix}\times\\
    \begin{pmatrix} 0&1&0\\ 0&0&1\\ 1&0&0\end{pmatrix}.$$
      We have $g_1^2=1$, $g_2^3=1$ and $g_1g_2g_1 = g_2^2 \bmod T_2$. So $g_1,g_2$
      generate a subgroup of the component group isomorphic to $S_3$.
      Using \cite[Proposition 3.3.17]{BorovoideGraafLe} it can be shown that
      $\Ho^1 Z = \{ [1], [g_1] \}$.
      An element $g\in \mathrm{SL}_3(\C)^3$ with $g^{-1} \bar{g} = g_1$ is
      $$\begin{pmatrix} \tfrac{1}{2}&-1&0&\\ 0&0&-i\\ -i&0&0 \end{pmatrix}\times
  \begin{pmatrix} 1&-1&0\\ i&i&0\\ 0&0&-\tfrac{1}{2}i\end{pmatrix}\times \\
    \begin{pmatrix} 1&1&0\\ i&-i&0\\ 0&0&\tfrac{1}{2}i\end{pmatrix}.$$
      The corresponding orbit representative is
      $$2\e{210}+2\e{201}-\tfrac{1}{4}\e{022}-\e{011}+\e{000}.$$
\item[20] $\e{000}+\e{111}$: the stabilizer is isomorphic to $T_4 \rtimes \Z / 2\Z$.
  The identity component $T_4$ consists of
  $$\diag(s^{-1}u^{-1}, t^{-1}v^{-1}, stuv )\times \diag(s,t,s^{-1}t^{-1}) \times
  \diag(u,v,u^{-1}v^{-1}).$$
  The component group is generated by
  $$g_0 = \begin{pmatrix} 0&1&0\\ 1&0&0\\ 0&0&-1\end{pmatrix} \times
    \begin{pmatrix} 0&1&0\\ 1&0&0\\ 0&0&-1\end{pmatrix}\times
      \begin{pmatrix} 0&1&0\\ 1&0&0\\ 0&0&-1\end{pmatrix}.$$
        We have $\Ho^1 Z = \{ [1], [g_0] \}$.
        An element $g\in \mathrm{SL}_3(\C)^3$ with $g^{-1} \bar{g} = g_0$ is
        $$\begin{pmatrix} 1&1&0\\ i&-i&0\\ 0&0&\tfrac{1}{2}i\end{pmatrix}\times
          \begin{pmatrix} 1&1&0\\ i&-i&0\\ 0&0&\tfrac{1}{2}i\end{pmatrix}\times
          \begin{pmatrix} 1&1&0\\ i&-i&0\\ 0&0&\tfrac{1}{2}i\end{pmatrix}.$$
   The corresponding orbit representative is
   $$2\e{000}-2\e{011}-2\e{101}-2\e{110}.$$
\end{enumerate}

Summarizing, apart from the real nilpotent orbits with representatives
given in Table \ref{CNilp} we get the real nilpotent orbits with representatives
given in Table \ref{RNilp}.
\begin{center}
\begin{longtable}[c]{|l|c|c|}
\caption{Nilpotent real 3-qutrits}
\label{RNilp}\\
\hline
$N$&3-qutrit & $\sigma$ \\
\hline
3 & $-\e{002}+\e{011}+\e{020}+\e{101}+\e{112}+\e{200}$ & $\id$\\
& & $(23)$ \\
& & $(123)$\\
\hline
9 & $2\e{210}+2\e{201}-\tfrac{1}{4}\e{022}-\e{011}+\e{000}$ & $\id$ \\
& & $(12)$ \\
& & $(132)$ \\
\hline
20 & $2\e{000}-2\e{011}-2\e{101}-2\e{110}$ & $\id$ \\
\hline
\end{longtable}
\end{center}

\section{Semisimple Elements}\label{semis}

In this section we consider the orbits of semisimple elements of $\g_1$.
We first describe the complex classification where we follow Nurmiev
\cite{Nurmiev} adding many details on the proof of the correctness of the
classification. These are then also needed for the classification in the
real case.

\subsection{Semisimple orbits: the complex case}\label{sec:semsimC}

Much of the theory needed to classify the semisimple orbits in $\g_1$ is due
to Vinberg \cite{vinberg}. Here we give a short overview of some of the main
concepts and results. 

A Cartan subspace $\mf{C}\subset \cV$ is by definition a maximal subspace
consisting of commuting semisimple elements. Vinberg \cite{vinberg} showed
that any two Cartan subspaces are conjugated with respect to $G_0$.
As a consequence, every semisimple orbit has a representative in any given
Cartan subspace.
\begin{defn}
The Weyl group of a Cartan subspace $\mf{C}$ is $W_\gC=N_{G_0}(\mf{C})/Z_{G_0}(\mf{C})$.
\end{defn}

The group $W_\gC$ is a finite complex reflection group,
\cite[Theorem 8]{vinberg}. Moreover, two
elements from $\mf{C}$ are $G_0$-conjugate if and only if they are
$W_\gC$-conjugate, \cite[Theorem 2]{vinberg}. So the classification of
of the semisimple orbits reduces to the classification of the $W_\gC$-orbits
in $\mf{C}$ where the latter is any fixed Cartan subspace. Those orbits
are infinitely many, and it does not seem to be possible to present them
in an irredundant list. However, it is possible to define several special
subsets of $\mf{C}$ such that each semisimple orbit has a point in exactly one
of them. Secondly, two elements of the same subset are $G_0$-conjugate
if and only if they are conjugate under an explicitly given finite group.
Here we show how that is done in our case.

Consider the following elements
\begin{align*}
u_1 &=\e{000}+\e{111}+\e{222}\\
u_2 &=\e{012}+\e{120}+\e{201}\\
u_3 &=\e{021}+\e{210}+\e{102}.
\end{align*}

By \cite{Nurmiev}, see also \cite[\S 5.4.4, Exercise 2]{wallach}, these
span a Cartan subspace $\gC$ of $\g_1$. We also checked this by computer.
In the sequel we fix this space.

Now we briefly review the construction of complex reflections in the Weyl
group given in \cite[\S 3.3]{VinEl}, adapted to our situation. Let $\h$ be
the centralizer of $\gC$ in the Lie algebra $\g$. Then $\h$ is
a Cartan subalgebra of $\g$. Let $\Phi$ be the root system of
$\g$ with respect to $\h$. Consider the automorphism $\theta$ of
order 3 (see Section \ref{sec:E6}). This automorphism stabilizes $\h$.
It induces a map $\theta_*$ of the dual space $\h^*$ by $\theta_*(\mu)(h) =
\mu(\theta(h))$ for $h\in \h$. For a root $\alpha$ of $\h^*$ we have that
the elements $\pm \alpha$, $\pm \theta_*(\alpha)$, $\pm \theta_*^2(\alpha)$
span a root subsystem of type $A_2$. By $U(\alpha)$ we denote the corresponding
subalgebra of $\g$. Let $\gC(\alpha) = \gC \cap U(\alpha)$, then
$\gC(\alpha)$ is of dimension 1. Let $\gC_0(\alpha) = \{ p\in \gC \mid
\alpha(p)=0\}$. Then $\gC = \gC_0(\alpha) \oplus \gC(\alpha)$. Moreover,
defining the linear map $w_\alpha : \gC\to \gC$ by $w_\alpha(p) = p$ for
$p\in \gC_0(\alpha)$, $w_\alpha(p) = \zeta p$ for $p\in \gC(\alpha)$ we obtain a
complex reflection of $\gC$. The arguments in \cite[\S 3.3]{VinEl} can
be used also here to show that $w_\alpha$ is induced by an element of
$N_{G_0}(\gC)$. Hence $w_\alpha \in W_\gC$.

The root system $\Phi$ is partitioned into twelve root subsystems of type $A_2$
as constructed above. This yields 12 reflections in $W_\gC$. In fact, they form
a single conjugacy class. The reflections of the form $w_\alpha^2$ also
form a conjugacy class, and these are the two conjugacy classes in $W_\gC$
consisting of complex reflections. 

In this case the Weyl group $W_\gC$ has order $648$. By the table in
\cite[\S 9]{vinberg} it is isomorphic to the
group $G_{25}$ in the Shephard-Todd classification, and is also denoted
$W(\mathcal{L}_3)$ in \cite{lehta}. 
Using the above
techniques we can find all reflections in $W_\gC$ and show by a straightforward
{\sf GAP} calculation that $W_\gC$ is generated by the complex reflections:

\[\begin{pmatrix} \zeta&0&0\\0&1&0\\0&0&1 \end{pmatrix}, \; \begin{pmatrix} 1&0&0\\0&\zeta&0\\0&0&1 \end{pmatrix}, \; \frac{1}{3}\begin{pmatrix} 2+\zeta&-1-2\zeta&2+\zeta\\2+\zeta&2+\zeta&-1-2\zeta\\-1-2\zeta&2+\zeta&2+\zeta \end{pmatrix}.\]

Now we describe how to obtain the special subsets of $\gC$ mentioned above.
By $W_p$ we denote the stabilizer of $p\in \gC$ in $W_\gC$. It is known
that also $W_p$ is generated by complex reflections,
\cite[Theorem 1.5]{steinberg}. 
For $p\in \gC$ set

\begin{align*}
\gC_p&=\{h \in \gC \, | \, w \, h = h  \, \text{ for all } w \in W_p\}\\
\gC_p^\circ &= \{ q \in \gC_p \, |  W_q=W_p\}.
\end{align*}

\begin{rmk}\label{Czarop}
The set $\gC_p^\circ $ is Zariski open in $\gC_p$. It is determined by the
inequalities  $w \, q \neq q$ for all
$w \in W_\gC \setminus W_p$. Because also the stabilizer $W_q$ is generated by
complex reflections we have that $q\in \gC_p$ lies in $\gC_p^\circ$ if and
only if $w\, q \neq q$ for all complex reflections in $W_\gC$ that do not lie in
$W_p$. Let $w$ be such a reflection. Then $w=w_\alpha$ or $w=w_\alpha^2$ for an
$\alpha\in \Phi$. From the description of $w_\alpha$ it follows that
$w(q)\neq q$ is equivalent to $\alpha(q)\neq 0$. We can compute a linear
polynomial $\psi_\alpha=c_1x_1+c_2x_2+c_3x_3$ such that $\alpha(a_1u_1+a_2u_2
+a_3u_3) = \psi_\alpha(a_1,a_2,a_3)$. Hence for $q=a_1u_1+a_2u_2+a_3u_3$ we
have that $w(q)\neq q$ if and only if $\psi_\alpha(a_1,a_2,a_3)\neq 0$. So by
taking the product of all $\psi_\alpha$ where $\alpha$ is such that $w_\alpha
\not\in W_p$ we can compute a polynomial inequality defining the set
$\gC_p^\circ$.
\end{rmk}

Now for $v \in W_\gC$ and $p,q\in \gC$ it is immediate that
\begin{equation}\label{eq:conj}
v\cdot \gC_p^\circ = \gC_q^\circ \text{  if and only if  } W_q = vW_pv^{-1}.
\end{equation}
Let $R$ denote the collection of all reflection subgroups of $W_\gC$ that are equal to a $W_p$ for a 
$p\in \gC$. Then it follows that $R$ is closed under conjugacy by $W_\gC$. Let 
$p_1,\ldots,p_m\in \gC$ be such that $W_{p_1},\ldots,W_{p_m}$ are representatives of the different
conjugacy classes in $R$. Then \eqref{eq:conj} implies that each semisimple orbit has a point in a
unique $\gC_{p_i}^\circ$. 

Let $p$ be one of the $p_i$. By \eqref{eq:conj} we see that the group of elements of $W_\gC$ mapping
$\gC_p^\circ$ to itself is exactly the normalizer $N_{W_\gC}(W_p)$. By definition of $\gC_p^\circ$
it follows that the group of elements fixing each element of $\gC_p^\circ$ is exactly $W_p$.
Hence the group $\Gamma_p = N_{W_\gC}(W_p) / W_p$ acts naturally on $\gC_p^\circ$.
Again from \eqref{eq:conj} we immediately get the following result.

\begin{theorem} Two elements of $\gC_p^\circ$ are $G_0$-conjugate if and only if they are $\Gamma_p$-conjugate.
\end{theorem}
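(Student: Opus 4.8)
The plan is to deduce the statement directly from Vinberg's criterion and the conjugacy relation \eqref{eq:conj}, so that essentially no new computation is required. Recall that by \cite[Theorem 2]{vinberg}, quoted above, two elements of the Cartan subspace $\gC$ are $G_0$-conjugate if and only if they are $W_\gC$-conjugate. Since $\gC_p^\circ \subset \gC$, it therefore suffices to show that two elements $q_1, q_2 \in \gC_p^\circ$ are $W_\gC$-conjugate if and only if they are $\Gamma_p$-conjugate, where $\Gamma_p = N_{W_\gC}(W_p)/W_p$ acts on $\gC_p^\circ$ in the natural way set up in the preceding paragraph.

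For the backward implication I would argue as follows. A $\Gamma_p$-relation between $q_1$ and $q_2$ means there is some $v \in N_{W_\gC}(W_p)$ with $v \cdot q_1 = q_2$; as $N_{W_\gC}(W_p) \subset W_\gC$, the two elements are then $W_\gC$-conjugate, hence $G_0$-conjugate by Vinberg. One uses here that the $\Gamma_p$-action is well defined precisely because $W_p$ fixes every point of $\gC_p^\circ \subset \gC_p$, while $N_{W_\gC}(W_p)$ preserves $\gC_p^\circ$ by \eqref{eq:conj} applied with $q = p$.

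For the forward implication, suppose $q_1, q_2 \in \gC_p^\circ$ are $G_0$-conjugate; by Vinberg there is $v \in W_\gC$ with $v \cdot q_1 = q_2$. The key step is the transport-of-structure identity $W_{v\cdot q_1} = v\, W_{q_1}\, v^{-1}$ for stabilizers, valid because $w$ fixes $v\cdot q_1$ exactly when $v^{-1} w v$ fixes $q_1$. Since $q_1$ and $q_2$ both lie in $\gC_p^\circ$, their stabilizers equal $W_p$ by definition, so this identity becomes $W_p = v\, W_p\, v^{-1}$, that is, $v \in N_{W_\gC}(W_p)$. Hence the class of $v$ defines an element of $\Gamma_p$ carrying $q_1$ to $q_2$, which is the desired $\Gamma_p$-conjugacy.

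I do not anticipate a genuine obstacle: the entire content is already packaged into \eqref{eq:conj}, and the remaining argument is a formal manipulation of point stabilizers in a finite group action. The only delicate points are bookkeeping ones, namely verifying that the $\Gamma_p$-action on $\gC_p^\circ$ is well defined (the case $q = p$ of \eqref{eq:conj} together with the definition of $\gC_p$) and keeping all elements inside the fixed Cartan subspace $\gC$ so that \cite[Theorem 2]{vinberg} may be invoked. Both are already established in the text, so the proof collapses to the two short implications above.
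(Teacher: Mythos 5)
Your proof is correct and follows essentially the same route as the paper, which derives the theorem "immediately" from \eqref{eq:conj} together with Vinberg's result that $G_0$-conjugacy on $\gC$ coincides with $W_\gC$-conjugacy; your argument simply makes explicit the transport-of-stabilizers identity $W_{v\cdot q} = vW_qv^{-1}$ that underlies \eqref{eq:conj}. No gaps.
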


We also have the following lemma, which is of fundamental importance for
the complex and for the real classification.

\begin{lemma}\label{lem:stab}
Let $p\in \gC$ and $q_1,q_2\in \gC_p^\circ$. Let $Z_{G_0}(q_i) = \{ g\in G_0 \mid
g\cdot q_i = q_i\}$ be their stabilizers in $G_0$. Then $Z_{G_0}(q_1) =
Z_{G_0}(q_2)$.  
\end{lemma}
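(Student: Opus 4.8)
The plan is to prove the sharper statement that $Z_{G_0}(q)=Z_{G_0}(\gC_p)$ for \emph{every} $q\in\gC_p^\circ$; the lemma then follows immediately by applying this to $q=q_1$ and to $q=q_2$. The inclusion $Z_{G_0}(\gC_p)\subseteq Z_{G_0}(q)$ is trivial because $q\in\gC_p$, so the whole content lies in the reverse inclusion. Throughout I would work with the Cartan subalgebra $\h=Z_\g(\gC)$; since $\gC\subseteq\h$, the symbol $\alpha(q)$ for $q\in\gC$ is literally the restriction to $\gC$ of the root $\alpha\in\h^*$, and (as recorded in Remark~\ref{Czarop}) the reflection $w_\alpha$ fixes $q$ precisely when $\alpha(q)=0$.

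First I would settle the identity components by showing $\z_{\g_0}(q)=\z_{\g_0}(\gC_p)$. Since $q\in\gC\subseteq\h$ is semisimple, its centralizer in $\g$ is the Levi-type algebra $Z_\g(q)=\h\oplus\bigoplus_{\alpha(q)=0}\g_\alpha$, so $\z_{\g_0}(q)=Z_\g(q)\cap\g_0$ is governed by the set $\{\alpha\in\Phi:\alpha(q)=0\}$. The key point is the inclusion $\z_{\g_0}(q)\subseteq\z_{\g_0}(\gC_p)$: if $\alpha(q)=0$ then $w_\alpha$ fixes $q$, hence $w_\alpha\in W_q=W_p$ (here I use $q\in\gC_p^\circ$), hence $w_\alpha$ fixes $\gC_p$ pointwise since $\gC_p=\{h:w\,h=h\ \forall w\in W_p\}$, and therefore $\alpha$ vanishes on all of $\gC_p$. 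Thus every root vanishing at $q$ vanishes on $\gC_p$, so $Z_\g(q)\subseteq Z_\g(x)$ for each $x\in\gC_p$, whence $\z_{\g_0}(q)\subseteq\bigcap_{x\in\gC_p}\z_{\g_0}(x)=\z_{\g_0}(\gC_p)$; the opposite inclusion is immediate. Consequently the two centralizers share the same Lie algebra, so $Z_{G_0}(q)^\circ=Z_{G_0}(\gC_p)^\circ$, and in particular $Z_{G_0}(q)^\circ$ fixes $\gC_p$ pointwise.

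The main obstacle is the component group: I must show that an \emph{arbitrary} $g\in Z_{G_0}(q)$, not merely one in the identity component, fixes $\gC_p$ pointwise. Here I would invoke Vinberg's conjugacy of Cartan subspaces inside the centralizer $\mf{l}=Z_\g(q)$. This $\mf{l}$ is reductive and $\theta$-stable (as $\theta(q)=\zeta q$), so $(\mf{l},\theta|_{\mf{l}})$ is again of $\theta$-group type, and both $\gC$ and $g\cdot\gC$ are Cartan subspaces of $\mf{l}$: they consist of commuting semisimple elements of $\mf{l}\cap\g_1$, they lie in $\mf{l}$ because they commute with $q$, and they remain maximal there by maximality in $\g_1$. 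Applying Vinberg's theorem to $\mf{l}$ yields an element $l$ of the connected group with Lie algebra $\mf{l}_0=\z_{\g_0}(q)$, that is $l\in Z_{G_0}(q)^\circ$, with $l\cdot(g\cdot\gC)=\gC$. Then $lg$ normalizes $\gC$ and fixes $q$, so its image in $W_\gC=N_{G_0}(\gC)/Z_{G_0}(\gC)$ lies in $W_q=W_p$ and therefore acts trivially on $\gC_p$; equivalently, $lg$ fixes $\gC_p$ pointwise. Since $l\in Z_{G_0}(q)^\circ=Z_{G_0}(\gC_p)^\circ$ also fixes $\gC_p$ pointwise, so does $g=l^{-1}(lg)$, giving $g\in Z_{G_0}(\gC_p)$ and hence the reverse inclusion.

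I expect the remaining difficulties to be bookkeeping rather than conceptual: verifying that $\mf{l}$ is genuinely reductive and $\theta$-stable and that $\gC$, $g\cdot\gC$ really are Cartan subspaces of $\mf{l}$ (so Vinberg's theorem applies with the correct group $Z_{G_0}(q)^\circ$), and confirming that the action of $lg\in N_{G_0}(\gC)$ on $\gC$ agrees with that of its Weyl-group image, so that the $W_p$-invariance of $\gC_p$ can be used. Once these are in place the argument closes, and taking $q_1,q_2\in\gC_p^\circ$ gives $Z_{G_0}(q_1)=Z_{G_0}(\gC_p)=Z_{G_0}(q_2)$.
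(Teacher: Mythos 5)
Your proposal is correct. Its first half is essentially the paper's own argument: both use that, for $q\in\gC_p^\circ$, a root $\alpha$ satisfies $\alpha(q)=0$ if and only if $w_\alpha\in W_q=W_p$, so the set of vanishing roots, and hence the centralizer $\z_{\g}(q)=\h\oplus\bigoplus_{\alpha(q)=0}\g_\alpha$, depends only on $W_p$; this settles the Lie algebras of the stabilizers. Where you genuinely diverge is the passage from Lie algebras to groups. The paper passes to $G=\Aut(\g)^\circ$ and quotes Steinberg's theorem \cite[Corollary 3.11]{steinberg2} that the centralizer in the connected group $G$ of a semisimple element of $\g$ is connected; equality of Lie algebras then forces $Z_G(q_1)=Z_G(q_2)$, and intersecting with $G_0$ finishes the proof. (The detour through $G$ is the whole point there: stabilizers in $G_0$ are typically disconnected --- e.g.\ finite of order $81$ for the first family --- so no connectedness argument can be run in $G_0$ directly.) You instead stay inside $G_0$ and control the component group by applying Vinberg's conjugacy of Cartan subspaces inside the reductive, $\theta$-stable centralizer $\mf{l}=\z_{\g}(q)$, thereby proving the sharper statement $Z_{G_0}(q)=Z_{G_0}(\gC_p)$ for every $q\in\gC_p^\circ$. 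Both routes are sound. The paper's is shorter, with a single citation doing the heavy lifting; yours avoids Steinberg's connectedness theorem altogether, identifies the common stabilizer concretely as the pointwise stabilizer of $\gC_p$, and would survive in settings where connectedness of centralizers in the ambient group is unavailable. The cost is exactly the bookkeeping you flag: Vinberg's theorem must be extended from semisimple to reductive graded algebras (split off the graded centre $\z(\mf{l})$, which lies in every Cartan subspace of $\mf{l}\cap\g_1$ and is fixed by the conjugating group), and the conjugating automorphism of $\mf{l}$ must be lifted to $Z_{G_0}(q)^\circ$, which works because the image of $Z_{G_0}(q)^\circ$ in $\Aut(\mf{l})$ is connected with Lie algebra $\ad_{\mf{l}}(\z_{\g_0}(q))$. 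Incidentally, your sharper statement also follows a posteriori from the paper's lemma: any $g\in Z_{G_0}(q)$ then fixes all of $\gC_p^\circ$, and its fixed-point set is a linear subspace containing the Zariski-dense subset $\gC_p^\circ$ of $\gC_p$, hence all of $\gC_p$.
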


\begin{proof}
Let $w\in W_\gC$ be a complex reflection. Then in the above notation we have
$w=w_\alpha$ or $w=w_\alpha^2$ for an $\alpha\in \Phi$. By the construction
of the $w_\alpha$ it follows that $w_\alpha(q) = q$ if and only if $\alpha(q)=0$.
Set $\Phi_i = \{ \alpha\in \Phi \mid \alpha(q_i) = 0\}$. Then as $W_{q_1} =
W_{q_2}$ it follows that $\Phi_1=\Phi_2$. Let $\mf{z}_i = \{ x \in \g \mid
[x,q_i]=0\}$. Then
$$\mf{z}_i = \h \bigoplus_{\alpha\in \Phi_i} \g_\alpha,$$
where $\g_\alpha$ denotes the root space in $\g$ corresponding to the
root $\alpha$. It follows that $\mf{z}_1 = \mf{z}_2$. By
\cite[Corollary 3.11]{steinberg2}
the stablizers $Z_{G}(q_i)$ are connected. The Lie algebra of $Z_{G}(q_i)$ is
$\mf{z}_i$. So $Z_G(q_1) = Z_G(q_2)$. But $Z_{G_0}(q_i) = G_0\cap Z_G(q_i)$.
\end{proof}

As seen above, each semisimple element is conjugate to one of the form
$u=a_1u_1+a_2u_2+a_3u_3$.
In \cite{Nurmiev} the representatives for semisimple orbits are classified in terms of the coefficients $\{a_i\}$ in five families, corresponding to the subsets of the form $\gC_p^\circ$. Each family corresponds to a conjugacy class of
reflection subgroups of $W_\gC$. It is known that $W_\gC$ has six conjugacy
classes of reflection subgroups;  see e.g., \cite[Table 3]{taylor}, where
the subgroups apart from the trivial subgroup and $W_\gC$ itself are denoted
$l_1$, $2l_1$, $l_2$ and $3l_1$. The group $3l_1$ is not the stabilizer of a
point, as it is of rank 3 and just stabilizes 0. So when we include the trivial
subgroup and $W_\gC$ itself we get five families of the form $\gC_p^\circ$.
Here we give the description of these families following \cite{Nurmiev}.
The fifth family is omitted because it is constituted only by the null vector.
This family corresponds to the subgroup which is $W_\gC$ itself. By Lemma
\ref{lem:stab} the elements of a fixed family all have the same stabilizer
in $G_0$, and hence in $\wG$. 
We also describe the stabilizers in $\wG$. They have been determined using
computational techniques based on Gr\"obner bases (as described in
\cite{BorovoideGraafLe}). We also
give an explicit description of the groups $\Gamma_p$. 

\textbf{First Family} This family corresponds to the trivial subgroup. 
The parameters $a_i$ satisfy the conditions \[a_1 a_2 a_3 \neq 0\]
\[\left(a_1^3+a_2^3+a_3^3\right)^3 - \left(3a_1 a_2 a_3 \right)^3 \neq 0\]
The stabilizer of the semisimple elements in the first family is finite of order 81 and 
generated by 
\begin{align*}
    &\diag(\zeta,\zeta^2,1,\zeta,\zeta^2,1,\zeta,\zeta^2,1)\\
    &\diag(\zeta^2,\zeta^2,\zeta^2,1,1,1,\zeta,\zeta,\zeta)\\
    &\begin{pmatrix} 0&0&\zeta^2\\1&0&0\\0&\zeta&0 \end{pmatrix}\times
    \begin{pmatrix}0&0&\zeta\\ \zeta^2&0&0\\ 0&1&0\end{pmatrix}\times
    \begin{pmatrix}0&0&1\\ \zeta&0&0\\ 0&\zeta^2&0\end{pmatrix}.
\end{align*}

Here the group $\Gamma_p$ is $W_\gC$.

\textbf{Second Family} This family corresponds to the reflection subgroup of order 3 generated
by $\diag(1,1,\zeta)$. It consists of the elements $a_1u_1+a_2u_2$. The parameters $a_1,a_2$ satisfy the open condition \[a_1 a_2 \left(a_1^3 + a_2^3\right) \neq 0.\]
The stabilizer of the elements of the second family is of the form $C\ltimes T_2$, where
$T_2$ is a 2-dimensional torus consisting of elements of the form
$$T_2(t_1,t_2) = \diag(t_1^{-1}t_2^{-1},t_1,t_2, t_1,t_2, t_1^{-1}t_2^{-1},t_2,t_1^{-1}t_2^{-1},t_1),$$
and $C$ is a group of order 9 generated by 
\begin{align*}
&\diag(\zeta,\zeta,\zeta,\zeta^2,\zeta^2,\zeta^2,1,1,1)\\
&\begin{pmatrix} 0&0&1\\1&0&0\\0&1&0 \end{pmatrix}\times
    \begin{pmatrix}0&0&1\\ 1&0&0\\ 0&1&0\end{pmatrix}\times
    \begin{pmatrix}0&0&1\\ 1&0&0\\ 0&1&0\end{pmatrix}
\end{align*}    

The group $\Gamma_p$ is of order 18 and generated (with respect to the basis
$u_1,u_2$) by 
$$\begin{pmatrix} 0 & -1 \\ -1 & 0 \end{pmatrix}, ~ \begin{pmatrix} 1 & 0 \\ 0 & \zeta\end{pmatrix}.$$

\textbf{Third Family} This family corresponds to the reflection subgroup of order 9 generated by 
$\diag(1,1,\zeta)$ and $\diag(1,\zeta,1)$. It consists of the elements $a_1 u_1$ with $a_1 \neq 0$.
The stabilizer of the elements of the third family is of the form $C\ltimes T_4$, where
$T_4$ is a 4-dimensional torus consisting of elements of the form
$$T_4(t_1,t_2,t_3,t_4)=\diag(t_1^{-1}t_3^{-1},t_2^{-1}t_4^{-1},(t_1t_2t_3t_4)^{-1},t_1,t_2,(t_1t_2)^{-1},t_3,t_4,(t_3t_4)^{-1}).$$
and $C$ is a group of order 3 generated by 
$$\begin{pmatrix} 0&0&1\\1&0&0\\0&1&0 \end{pmatrix}\times
    \begin{pmatrix} 0&0&1\\1&0&0\\0&1&0 \end{pmatrix}\times
    \begin{pmatrix} 0&0&1\\1&0&0\\0&1&0 \end{pmatrix}.$$

The group $\Gamma_p$ is of order 6 and generated by $-\zeta$.

\textbf{Fourth Family} This family corresponds to the reflection subgroup of order 24 generated by
$\diag(\zeta,1,1)$ and 
$$\tfrac{1}{3}\begin{pmatrix} 2+\zeta & 2+\zeta & 2+\zeta \\ -1-2\zeta & 2+\zeta & -1+\zeta\\
-1-2\zeta & -1+\zeta & 2+\zeta \end{pmatrix}.$$
It consists of the elements $a(u_2-u_3)$.
The stabilizer of the elements of the fourth family is of the form $F\ltimes C^\circ$, where
$C^\circ$ consists of all matrices $A\times A\times A$, with $A\in \SLL(3,\C)$
and $F$ is a group of order 3 generated by
$\diag(\zeta,\zeta,\zeta,\zeta^2,\zeta^2,\zeta^2,1,1,1)$.

The group $\Gamma_p$ is generated by $\zeta$.

\begin{rmk}\label{rem:Symsemsim}
Here we comment on the classification up to permutation of the tensor factors
(see Section \ref{sec:perm}). The elements of the families 1 and 3 are stable
under all permutations. The element $a(u_2-u_3)$ of the fourth family
is by a permutation of
order 2 mapped to $-a(u_2-u_3)$, whereas it is stable under permutations of
order 3. So if we consider the action of $\mathrm{Sym}_2$, $\mathrm{Sym}_3$ then
$a(u_2-u_3)$ is conjugate to $-a(u_2-u_3)$. So here we do not consider the
group $\Gamma_p$ of order 3, but rather a group of order 6.
The elements $a_1u_1+a_2u_2$ of the
second family are mapped under a permutation of order 2 to $a_1u_1+a_2u_3$.
The Weyl group $W_\gC$ contains the transformation
$$\begin{pmatrix} 0&0&1\\ 1&0&0\\ 0&1&0\end{pmatrix}$$
mapping $a_1u_1+a_2u_3$ to $a_2u_1+a_1u_2$. So if we also consider the action of
the symmetric group, $a_1u_1+a_2u_2$ is conjugate to $a_2u_1+a_1u_2$. So in this
case the elements of the second family are conjugate under
$\mathrm{Sym}_k\ltimes\wG$ ($k=2,3$) if and only if they are conjugate under
a group of order 36. 
\end{rmk}  

\subsection{Semisimple orbits: the real case}\label{semsim:real}

Now we recall some constructions from \cite{BorovoideGraafLe2}. The proofs
in the latter paper were relative to the special case considered in that
paper. However, because of Lemma \ref{lem:stab} all proofs go through also in
our case. Therefore we omit them.

Let $p\in \gC$ and write $\cF = \gC_p^\circ$. Recall that $\Gamma_p =
N_{W_{\gC}}(W_p)/W_p$. Let
\begin{align*}
  N_{\wG}(\cF) &= \{ g\in \wG \mid g\cdot q \in \cF \text{ for all } q\in \cF\}\\
  Z_{\wG}(\cF) &= \{ g\in \wG \mid g\cdot q=q \text{ for all } q\in \cF\}.
\end{align*}  
Define a map $\varphi : N_{\wG}(\cF)\to \Gamma_p$ in the following way. Let
$g\in N_{\wG}(\cF)$; then $g\cdot p\in \cF$ and hence there is a $w\in
N_{W_{\gC}}(W_p)$ with $g\cdot p = w\cdot p$; and we set $\varphi(g) = wW_p$.
Then $\varphi$ is well-defined and a surjective group homomorphism with kernel
$Z_{\wG}(\cF)$.

We have the following theorem. For a proof see
\cite[Proposition 5.2.4]{BorovoideGraafLe2}, or \cite[Theorem 5.3]{dgmo}. 

\begin{theorem}\label{thm:realsemsim}
  Let $\mc{O} = \wG\cdot p$ be the orbit of $p$.
  Write $H^1(\Gamma_p) = \{ [\gamma_1],\ldots, [\gamma_s]\}$.
  Suppose that for each  $\gamma_i$ there is
  $n_i\in Z^1 (N_{\wG}(\cF))$ with $\varphi(n_i)=\gamma_i$.
  Then $\mc{O}$ has a real point if and only if
  there exist $q\in \mc{O}\cap \cF$ and $i\in\{1,\ldots,s\}$ with
  $\bar q = \gamma_i^{-1} q$. If the latter holds,
  then $gq$ is a real point of $\mc{O}$, where $g\in \wG$ is such that
  $g^{-1} \bar g = n_i$. 
\end{theorem}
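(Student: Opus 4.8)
The plan is to prove the two implications separately, using as backbone the short exact sequence
$$1 \to Z_{\wG}(\cF) \to N_{\wG}(\cF) \xrightarrow{\varphi} \Gamma_p \to 1$$
together with three structural inputs. First, $H^1\wG=1$ (Criterion~\ref{Hvan:torus}), so every cocycle of $\wG$ is a coboundary. Second, Lemma~\ref{lem:stab} gives $Z_{\wG}(q)=Z_{\wG}(\cF)=\ker\varphi$ for every $q\in\cF$; consequently $\Gamma_p$ acts \emph{freely} on $\cF$ (a coset $wW_p$ fixing $q$ forces $w\in W_q=W_p$), and, combined with the result of Section~\ref{sec:semsimC} that elements of $\gC_p^\circ$ are $G_0$-conjugate iff $\Gamma_p$-conjugate, it makes $\mc{O}\cap\cF$ a single $\Gamma_p$-orbit. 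Third, $\gC$, $W_\gC$ and the chosen subgroup $W_p$ (hence $\cF$ and $\Gamma_p$) are stable under complex conjugation; this holds in our cases because the relevant $W_p$ are conjugation-stable (they are generated by diagonal reflections such as $\diag(1,1,\zeta)$, whose conjugates are their squares).

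For the ``if'' direction I would argue constructively. Given $q\in\mc{O}\cap\cF$ and $i$ with $\bar q=\gamma_i^{-1}q$, the hypothesis furnishes $n_i\in Z^1(N_{\wG}(\cF))$ with $\varphi(n_i)=\gamma_i$, and since $H^1\wG=1$ there is $g\in\wG$ with $g^{-1}\bar g=n_i$. Then I compute $\overline{gq}=\bar g\,\bar q=g\,n_i\,\gamma_i^{-1}q$; since $\ker\varphi=Z_{\wG}(\cF)$ fixes $\cF$ pointwise, $n_i$ acts on $\cF$ exactly as $\varphi(n_i)=\gamma_i$, whence $n_i\cdot(\gamma_i^{-1}q)=q$ and $\overline{gq}=gq$. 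As $gq\in\mc{O}$, this is the asserted real point.

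For the ``only if'' direction, start from a real $v\in\mc{O}$, write $v=h\cdot q$ with $q\in\mc{O}\cap\cF$, and set $c=h^{-1}\bar h$, a cocycle of $\wG$; realness of $v$ gives $c\cdot\bar q=q$, i.e. $\bar q=c^{-1}q$. Now $\bar q\in\mc{O}\cap\gC$ lies in the $\wG$-orbit of $q$, so by the fact recalled in Section~\ref{sec:semsimC} there is $w\in W_\gC$ with $\bar q=w\cdot q$. Comparing stabilizers gives $W_{\bar q}=wW_pw^{-1}$, while a direct conjugation computation gives $W_{\bar q}=\overline{W_p}=W_p$; hence $w\in N_{W_\gC}(W_p)$ and $\gamma:=wW_p\in\Gamma_p$ satisfies $\bar q=\gamma q$. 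Conjugating once more and using the free action yields $\bar\gamma\gamma=1$, so $\gamma$ is a cocycle of $\Gamma_p$. Finally I track the effect of replacing $q$ by $\delta q$ ($\delta\in\Gamma_p$): one finds $\gamma\mapsto\bar\delta\gamma\delta^{-1}$, equivalently $\gamma^{-1}\mapsto\delta\gamma^{-1}\overline{\delta^{-1}}=(\delta^{-1})^{-1}\gamma^{-1}\overline{\delta^{-1}}$, so that $\gamma^{-1}$ sweeps out its full cohomology class as $q$ runs over $\mc{O}\cap\cF=\Gamma_p\cdot q$. Since $[\gamma^{-1}]=[\gamma_i]$ for some $i$, choosing the appropriate point of the orbit makes $\gamma^{-1}=\gamma_i$, i.e. $\bar q=\gamma_i^{-1}q$, as required.

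The step I expect to be most delicate is the convention bookkeeping: one must verify the free action of $\Gamma_p$ (which is what turns ``$\gamma$ fixes $q$'' into $\bar\gamma\gamma=1$ and underlies the single-orbit statement), and one must align the inverse in $\bar q=\gamma_i^{-1}q$ with the cocycle equivalence $h^{-1}g\bar h$, which is precisely why it is the class of $\gamma^{-1}$, not $\gamma$, that must be matched to $[\gamma_i]$. Beneath all of this is the conjugation-stability of $\cF$ and $W_p$; without it $\bar q$ need not lie in $\cF$ and the reduction to $\Gamma_p$ collapses, so establishing it (via the equality of stabilizers in Lemma~\ref{lem:stab} and the explicit conjugation-stable choice of the subgroups $W_p$) is the one genuinely non-formal ingredient.
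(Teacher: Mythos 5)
Your proof is correct and takes essentially the route of the proof the paper itself relies on (the paper omits the argument, citing \cite[Proposition 5.2.4]{BorovoideGraafLe2} and \cite[Theorem 5.3]{dgmo}): the standard twisting argument combining $\Ho^1\wG=1$, the fact that $\mc{O}\cap\cF$ is a single, freely acted-on $\Gamma_p$-orbit, and Lemma~\ref{lem:stab}, which is exactly the ingredient the paper singles out as what makes the cited proofs transfer to this setting. Two spots deserve one more line each but are not gaps: the claim that any $g\in N_{\wG}(\cF)$ acts on \emph{all} of $\cF$ as $\varphi(g)=wW_p$ needs a lift $\tilde w\in\wG$ of $w$ together with $\tilde w^{-1}g\in Z_{\wG}(p)=Z_{\wG}(\cF)$ from Lemma~\ref{lem:stab} (the kernel statement alone only gives that the action factors through $\Gamma_p$), and the conjugation-stability of $W_p$ (hence of $\cF$ and $\Gamma_p$) is obtained more cleanly by choosing the base point $p$ real, which every family permits, rather than by inspecting the generators of the order-24 group of the fourth family.
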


Using this theorem we classify the real semisimple elements, according to the complex family they lie in.

\textbf{First family} A brute force computer calculation shows that
$H^1 W_\gC$ is trivial. So by Theorem \ref{thm:realsemsim} it follows that
the orbits in this family having real points are 
exactly the orbits of $a_1u_1+a_2u_2+a_3u_3$ with all $a_i$ real. Let $Z$
denote the stabilizer of such a point in $\wG$. Since $Z$ is a group of order
$3^4$ it follows that $\Ho^1 Z$ is trivial (Criterion \ref{crt:pgroup}).
Therefore the complex orbit of $a_1u_1+a_2u_2+a_3u_3$ with all $a_i$ real
corresponds to exactly one real orbit. 

\textbf{Second family} Here $\Gamma_p$ has 18 elements. By Criterion
\ref{H:generators} $H^1 \Gamma_p$ consists of the trivial
class and the class of 
$$\gamma = \begin{pmatrix} 0 & -1 \\ -1 & 0 \end{pmatrix}.$$
So in this family there are two groups of orbits having real points. The first consists of the 
orbits of $a_1u_1+a_2u_2$ with $a_i$ real. Let $Z$ denote the stabilizer
of such a point, which is given explicitly in Section \ref{sec:semsimC}.
By Criterion \ref{crit:alg1}
have $H^1 Z = 1$. Therefore each complex orbit of a $a_1u_1+a_2u_2$ with $a_i$
real corresponds to exactly one real orbit. 

The second group consists of the orbits of $p$ with $\bar p = \gamma p$. These
are 
$p = a(u_1-u_2) + i b (u_1+u_2)$ with $a,b\in \R$. We have that
$\gamma$ is induced by 
\begin{equation}\label{eq:fam2n}
n = \begin{pmatrix} 0 & 0 & -1 \\ 0 & -1 & 0 \\ -1 & 0 & 0 \end{pmatrix} \times
\begin{pmatrix} -1 & 0 & 0 \\ 0 & 0 & -1 \\ 0 & -1 & 0 \end{pmatrix}\times
\begin{pmatrix} 0 & -1 & 0 \\ -1 & 0 & 0 \\ 0 & 0 & -1 \end{pmatrix},
\end{equation}
which is a cocycle. Setting
\begin{equation}\label{eq:fam2g}
g = \begin{pmatrix} -\tfrac{1}{2} & 0 & \tfrac{1}{2} \\ 0 & i & 0 \\ i & 0 & i\end{pmatrix}\times
\begin{pmatrix} -\tfrac{1}{2}i & 0 & 0\\ 0 & 1 & -1 \\ 0 & i & i \end{pmatrix}\times
\begin{pmatrix} 1 & -1 & 0 \\ i & i & 0 \\ 0 & 0 & -\tfrac{1}{2}i \end{pmatrix}
\end{equation}
we have $g^{-1}\bar g = n$. Let $v_1 = g\cdot (u_1-u_2)$ and $v_2 = g\cdot i(u_1+u_2)$. Then 
\begin{equation}\label{eq:newsemsim1}
\begin{aligned}  
v_1 &= -\mye{212}+\mye{200}+2\mye{120}-2\mye{111}+\tfrac{1}{2}\mye{022}-\tfrac{1}{2}\mye{001}\\
v_2 &= -\mye{222}-\mye{201}+2\mye{121}+2\mye{110}-\tfrac{1}{2}\mye{012}-\tfrac{1}{2}\mye{000}.
\end{aligned}
\end{equation}

So real representatives of these orbits are $a_1v_1+a_2v_2$, and the polynomial conditions translate 
to $a_1^2+a_2^2\neq 0$ and $a_2(a_2^2-3a_1^2)\neq 0$. Let $Z$ denote the stabilizer of such an 
element. A computer calculation shows
that $\Ho^1 Z^\circ=1$. Since the component group has order 9, Criterion
\ref{crit:alg1} shows that $\Ho^1 Z=1$. Hence the complex orbits with these
representatives correspond to one real orbit. 

\textbf{Third family} Here $\Gamma_p$ is generated by $-\zeta$. Hence $H^1 \Gamma_p = \{[1],[-1]\}$.
So also here the orbits with real representatives come in two groups. The first group has representatives $au_1$ with $a\in \R$. Let $Z$ be the stabilizer of such an element in $G$. Then 
$H^1 Z=1$, hence each of these orbits correspond to one real orbit. The second group has representatives $iau_1$ with $a\in \R$. We have that $-1$ is induced by the cocycle 
$n= A\times A\times A$ with 
\begin{equation}\label{eq:fam3n}
  A=\begin{pmatrix} 0 & -1 & 0 \\ -1 & 0 & 0 \\ 0 & 0 & -1 \end{pmatrix}.
\end{equation}  
Setting
\begin{equation}\label{eq:fam3g}
B = \begin{pmatrix} 1 & -1 & 0 \\ i & i & 0 \\ 0 & 0 & -\tfrac{1}{2}i \end{pmatrix}
\end{equation}
and $g=B\times B\times B$ we have that $g^{-1} \bar g = n$. 
We have that $g\cdot iu_1 = v$ with
\begin{equation}\label{eq:newsemsim2}
  v= -2\mye{001}-2\mye{010}-2\mye{100}+2\mye{111}-\tfrac{1}{8} \mye{222}.
\end{equation}  
Let $Z$ be the stabilizer of $v$ in $\wG$. By computing the Lie algebra of
$Z^\circ$, a further computer calculation shows that $\Ho^1 Z^\circ = 1$.
So again Criterion \ref{crit:alg1} shows that $\Ho^1 Z=1$.
Hence each complex orbit with representative
$av$, $a\in \R$, corresponds to one real orbit with the same representative. 

\textbf{Fourth family} Here $\Gamma_p$ is generated by $\zeta$. Hence
$H^1 \Gamma_p = 1$.
It follows that the orbits with real representatives are the orbits of $a(u_2-u_3)$ with 
$a\in \R$. Let $Z$ be the stabilizer of such an element in $\wG$. Criterion
\ref{crit:alg1} shows that $\Ho^1 Z=1$. Hence the complex orbit of
$a(u_2-u_1)$, with $a\in \R$, corresponds to one real orbit.

\section{Mixed Elements}\label{sec:mixed}

An $x\in \g_1$ is called {\em mixed} if it has Jordan decomposition $x = s+n$,
where $s$ and $n$ are two non zero semisimple and nilpotent elements,
respectively such that $[s,n]=0$.

\subsection{Mixed elements: the complex case}\label{mixed:complex}
Each mixed element is conjugate under $G_0$ to an element whose
semisimple part is one of the semisimple orbit representatives described in
Section \ref{semis}. So fix a semisimple part $s$ in one of the sets
$\gC_p^\circ$. Then consider its centralizer
$\a = \z_{\g}(s)$. Since $s\in \g_1$ this algebra inherits the grading
of $\g$: $\a = \a_{-1} \oplus \a_0 \oplus \a_1$ where $\a_1 =
\a\cap \g_i$. The possible nilpotent parts of mixed elements with
semisimple part $s$ lie in $\a_1$. The subalgebra $\a_0$ is the Lie algebra
of $Z_{G_0}(s)$. So $Z_{G_0}(s)$ acts on $\a_1$. It is clear that two mixed
elements $s+e_1$, $s+e_2$ are $G_0$-conjugate if and only if they are
$Z_{G_0}(s)$-conjugate. It follows that we can classify the possible nilpotent
parts of the mixed elements with semisimple part equal to $s$ by
classifying the nilpotent elements in $\a_1$ under the action of
$Z_{G_0}(s)$. The nilpotent orbits in $\a_1$ under the action of the
identity component $Z_{G_0}(s)^\circ$ can be classified with generic algorithms
for the classification of the nilpotent orbits of a $\theta$-group. Subsequently
it has to be seen which orbits are identified under by component group.
Also note that semisimple elements of a fixed $\gC_p^\circ$ have the same
centralizer $\g$ and the same stabilizer in $G_0$ by Lemma \ref{lem:stab}.
In other words, they do not depend on the semisimple element $s$, just on the
set  $\gC_p^\circ$.

We now give the classification of the possible nilpotent parts of mixed
elements. For each family of semisimple elements we have such a classification.
Here we omit the first and fifth families:
in the first case, the only possible nilpotent part is zero, in the latter
the semisimple part is zero.

We have computed the orbit classifications with the help of {\sf GAP}. For the
second and fourth families we obtained a result equivalent to the tables in
\cite{Nurmiev}. Therefore in these cases we have taken the same representatives.
For the third family our computation gave quite different results, showing 
that Table 2 in \cite{Nurmiev} is erroneous. 

Let $s$ be a semisimple element of
the $i$-th family. Let $\a$ be as above, and let $e\in \a_1$ be nilpotent.
Then $e$ lies in a homogeneous $\sll_2$-triple $(h,e,f)$ with $h\in \a_0$,
$f\in \a_{-1}$. For each nilpotent element $e$ in the classification we also
give a (not always very explicit) description of the stabilizer
$Z_{\wG}(s,h,e,f) = \{ g\in \wG \mid g\cdot s=s,\, g\cdot h=h,\,g\cdot e = e,\,
g\cdot f=f\}$ in terms of the identity component and its component group.

The possible nilpotent parts of a mixed element with semisimple part from
the second family are given in Table \ref{NilpotentMixed2}.

\begin{longtable}{|l|c|c|c|} 
\caption{Nilpotent Part of Mixed Elements of Second Family}\label{NilpotentMixed2}\\
\hline
Nurmiev & Representative & $Z_{\wG}(s,h,e,f)^\circ$ & component group \\
\hline
1 & $\mye{021}+\mye{102}$ & $\id$ & $(\Ztre)^3$\\
    \hline
2 & $\mye{021}$ & $T_1$ & $(\Ztre)^2$\\
\hline
\end{longtable}

In the third family we have 
$s = a(\mye{000}+\mye{111}+\mye{222})$, where $a$ is a nonzero scalar.
Hence $s$ is stable under permutation of tensor factors. So if we let
$\a=\z_{\g}(s)$ then also $\a_1$ is stable under permutations of the tensor
factors. As seen in section
\ref{sec:perm} orbits that are obtained from each other by permutation
share many properties, such as the structure of the stabilizer.
In this case there are 16 orbits of possible nilpotent parts.
In the table below we give the representatives of these orbits in groups:
different members of the same group are related by a permutation and hence have
isomorphic centralizers. Therefore we only describe the centralizer for the
first element in a group. The numbering is according to the table for the
same orbits in Nurmiev's paper, \cite[Table 2]{Nurmiev}.

\begin{longtable}{|l|c|r|c|c|} 
\caption{Nilpotent Part of Mixed Elements of Third Family}\label{NilpotentMixed3}\\
\hline
Nurmiev & Representative & $\pi$ & $Z_{\wG}(s,h,e,f)^\circ$ & component group\\
\hline
1 & $\mye{012}+\mye{021}+\mye{102}+\mye{120}$ &$\id$ & $\id$&$(\Ztre)^3$\\
& $\mye{102}+\mye{201}+\mye{012}+\mye{210}$ & $(1,2)$ & & \\
& $\mye{210}+\mye{120}+\mye{201}+\mye{021}$ & $(1,3)$ & &\\
    \hline
2 & $\mye{012}+\mye{021}+\mye{102}$ & $\id$&  $T_1$& $ (\Ztre)^2$\\
& $\mye{021}+\mye{012}+\mye{120}$ & $(2,3)$&& \\
& $\mye{102}+\mye{201}+\mye{012}$ & $(1,2)$&& \\
& $\mye{201}+\mye{102}+\mye{210}$ & $(1,2,3)$ &&\\
& $\mye{120}+\mye{210}+\mye{021}$ & $(1,3,2)$ &&\\
& $\mye{210}+\mye{120}+\mye{201}$ & $(1,3)$ && \\
\hline
4 & $\mye{012}+\mye{021}$ & $\id$ & $T_2 $&$\Ztre$\\
& $\mye{102}+\mye{201}$ & $(1,2)$ && \\
& $\mye{210}+\mye{120}$ & $(1,3)$ & &\\
\hline 
5 & $\mye{012}+\mye{120}$ &$\id$ & $T_2 $&$ \Ztre$ \\
& $\mye{102}+\mye{210}$ & $(1,2)$ & &\\
\hline
7 & $\mye{012}$ &$\id$ & $T_3$& $\id$\\
& $\mye{102}$ & $(1,2)$ & &\\
\hline
\end{longtable}

\begin{rmk}
\cite[Table 2]{Nurmiev} contains 8 orbits (apart from the zero orbit, which
we do not list here). His representative number 3 is conjugate to the second in
our list under 2. Nurmiev's  representative number 6 is conjugate to 
the second in our list under 5. Nurmiev's representative 8 is the second in
our list under 7.
\end{rmk}

\begin{longtable}{|l|c|c|c|} 
\caption{Nilpotent Part of Mixed Elements of Fourth Family}\label{NilpotentMixed4}\\
\hline
Nurmiev & Representative & $Z_{\wG}(s,h,e,f)^\circ$ & component group\\
\hline
1 & $\mye{002}+\mye{020}+\mye{111}+\mye{200}$ & $\id $&$(\Ztre)^2$ \\
    \hline
2 & $\mye{002}+\mye{011}+\mye{020}+\mye{101}+\mye{110}+\mye{200}$ & $\id$&$(\Ztre)^2$\\
\hline
3 & $\mye{000}+\mye{111}$ & $\id$& $(\Ztre)^3 \ltimes \Z/2\Z$.\\
\hline 
4 & $\mye{001}+\mye{010}+\mye{100}$ & $T_1 $ &$\Ztre$ \\
\hline
5& $\mye{000}$ &   $\SL $ & $ (\Ztre)^2$\\
\hline
\end{longtable}

For later use we explicitly give the element of order 2 in the stabilizer
corresponding to $e=\e{000} + \e{111}$. It is   $A\times A\times A$ with 
$$A=\begin{pmatrix} 0&1&0\\1&0&0\\0&0&-1\end{pmatrix}.$$

\begin{rmk}\label{rem:Symnilp}
Here we comment on the classification up to permutation of the tensor factors
(see Section \ref{sec:perm}). The elements of Tables \ref{NilpotentMixed2},
\ref{NilpotentMixed4} are not conjugate to each other if we also consider
permutations of the tensor factors. So for the mixed elements of these families
there just remain the extra conjugacies coming from the semisimple parts
(Remark \ref{rem:Symsemsim}). For the nilpotent parts of elements whose
semisimple part is of the third family, the permutation action on the tensor
factors is given in detail in Table \ref{NilpotentMixed3}.
\end{rmk}

\subsection{Mixed elements: the real case}

The real semisimple elements are divided in two types. There are the real
semisimple elements in one of the families described in Section \ref{semis};
we say that those are canonical semisimple elements. Secondly we have the
non-canonical semisimple elements: they are $\wG$-conjugate, but not
$\wG(\R)$-conjugate to elements of one of the families. From Section
\ref{semsim:real} we see that they occur relative to the second and third
families. Correspondingly we have canonical and non-canonical mixed elements,
whose semisimple parts are canonical and non-canonical respectively.

Each complex orbit of a canonical mixed element has a real representative given
in the tables of the previous section. Let $u=s+e$ be such an element. Then
$e\in \a_1$, where $\a = \z_{\g}(s)$. Let $(h,e,f)$ be a homogeneous
$\sll_2$-triple in $\a$ containing $e$. Consider
$$Z_{\wG}(s,h,e,f) = \{ g\in \wG \mid g\cdot s=s,\, g\cdot h=h,\, g\cdot e=e,\,
g\cdot f = f\}.$$
Then the $\wG(\R)$-orbits contained in the complex orbit $\wG\cdot u$ correspond
bijectively to $\Ho^1 Z_{\wG}(s,h,e,f)$. From the descriptions of these groups in the tables
of Section \ref{mixed:complex} we see that this cohomology set is mostly
trivial. In those cases the complex orbit $\wG\cdot(s+e)$ contains one real
orbit $\wG(\R)\cdot(s+e)$. The one exception is the third element in
Table \ref{NilpotentMixed4}. In this case $|\Ho^1 Z_{\wG}(s,h,e,f)|=2$ and the
nontrivial cocycle is its element of order 2 which is $n=A\times A\times A$.
Let $g= B\times B\times B$ where
$$B=\begin{pmatrix} 1 & 1 & 0 \\ i & -i & 0 \\ 0 & 0 & \tfrac{1}{2} i
\end{pmatrix}.$$
Then $n=g^{-1}\bar g$ and $g\in Z_G(s)$. We have
$$g\cdot(\mye{000}+\mye{111}) = 2(\mye{000}-\mye{011}-\mye{101}-\mye{110}).$$
So here we get the extra real mixed orbits with representatives
\begin{equation}\label{eq:mixedcan}
  a(u_2-u_3) +2(\mye{000}-\mye{011}-\mye{101}-\mye{110}), \,a\in\R.
\end{equation}  

If $s$ is not canonical then $s$ is of the form $s = g\cdot s_0$, where
$s_0$ is a non-real element lying in one of the families. Furthermore,
$g^{-1}\bar g = n$, where $n$ is a cocycle in $Z_{\wG}(s_0)$ with $\bar s_0 =
n^{-1} s_0$.  If $e_0$ is a
nilpotent element commuting with $s_0$ from one of the tables of the
previous section, then $s+g\cdot e_0$ is a mixed element. However,
in general $g\cdot e_0$ is not real. So the first thing we have to do
is to see whether the orbit $Z_{\wG}(s)\cdot (g\cdot e_0)$ has real elements, and
find one in the affirmative case. On the other hand, if there are no real
points in this orbit, then this $e_0$ will not lead to mixed elements
with semisimple part equal to $s$, and can therefore be discarded.

For a semisimple element $p$ set $\u_p = \z_{\g}(p) \cap \g_1$.
Then $x\mapsto g\cdot x$ is a bijection $\u_{s_0} \to \u_s$. It also maps
$Z_{\wG}(s_0)$-orbits to $Z_{\wG}(s)$-orbits. Also for $x\in \u_{s_0}$ we have that
$n\bar x \in \u_{s_0}$ and $g\cdot x$ is real if and only if $n\bar x = x$.
So we define a map $\mu : \u_{s_0} \to \u_{s_0}$ by $\mu(x) = n\bar x$.
Now \cite[Lemma 5.3.1]{BorovoideGraafLe2} says the following.

\begin{lemma}\label{lem:Y}
Let $e\in \u_{s_0}$ be nilpotent and let $Y = Z_{\wG}(s_0)\cdot e$ be its orbit.
Let $y_0\in Y$. Then $\mu(Y) = Y$ if and only if $\mu(y_0)\in Y$.   
\end{lemma}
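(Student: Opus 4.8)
The plan is to prove that $\mu$ maps the whole $Z_{\wG}(s_0)$-orbit $Y$ onto a single $Z_{\wG}(s_0)$-orbit; once that is known, the stated equivalence follows at once from the fact that two orbits which meet must coincide. Write $Z=Z_{\wG}(s_0)$ throughout, and recall from the setup that $\overline{g\cdot x}=\bar g\cdot\bar x$, that $n$ is a cocycle (so $\bar n=n^{-1}$), and that $\bar s_0=n^{-1}\cdot s_0$, equivalently $n\cdot\bar s_0=s_0$.

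First I would record how $\mu$ intertwines the $Z$-action. For $g\in\wG$ and $x\in\u_{s_0}$,
\[
\mu(g\cdot x)=n\cdot\overline{g\cdot x}=n\bar g\cdot\bar x=(n\bar g n^{-1})\cdot(n\cdot\bar x)=\tau(g)\cdot\mu(x),
\]
where I set $\tau(g):=n\bar g n^{-1}$. The crucial verification is that $\tau$ is a bijection of $Z$ onto itself. Using $\bar n=n^{-1}$ one checks directly that $\tau\circ\tau=\id$ on $\wG$, so $\tau$ is an involution and in particular bijective; it therefore suffices to see that $\tau(Z)\subseteq Z$. For $g\in Z$ I would compute
\[
\tau(g)\cdot s_0=n\bar g\cdot(n^{-1}\cdot s_0)=n\bar g\cdot\bar s_0=n\cdot\overline{g\cdot s_0}=n\cdot\bar s_0=s_0,
\]
using $n^{-1}\cdot s_0=\bar s_0$ in the second step, $g\cdot s_0=s_0$ in the fourth, and $n\cdot\bar s_0=s_0$ in the last. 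Hence $\tau(g)\in Z$, and $\tau$ restricts to a bijection of $Z$.

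With this in hand I would compute the image of the orbit outright:
\[
\mu(Y)=\mu(Z\cdot e)=\{\tau(g)\cdot\mu(e)\mid g\in Z\}=\tau(Z)\cdot\mu(e)=Z\cdot\mu(e),
\]
so $\mu(Y)$ is exactly the $Z$-orbit of $\mu(e)$, a single orbit inside $\u_{s_0}$. Now both $Y$ and $\mu(Y)$ are $Z$-orbits, and distinct $Z$-orbits are disjoint. The forward implication of the lemma is trivial, since $\mu(Y)=Y$ forces $\mu(y_0)\in\mu(Y)=Y$. Conversely, for any $y_0\in Y$ one always has $\mu(y_0)\in\mu(Y)$; if in addition $\mu(y_0)\in Y$, then $Y$ and $\mu(Y)$ share the point $\mu(y_0)$ and must therefore coincide. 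I do not anticipate a genuine obstacle: the only delicate point is the invariance $\tau(Z)=Z$, which hinges on applying the relation $\bar s_0=n^{-1}\cdot s_0$ twice rather than on any assumption that $n$ itself fixes $s_0$.
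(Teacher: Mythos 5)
Your proof is correct. The paper does not prove Lemma \ref{lem:Y} itself but imports it from \cite[Lemma 5.3.1]{BorovoideGraafLe2}, and your argument --- showing that $\mu$ intertwines the $Z_{\wG}(s_0)$-action via the twisted conjugation $\tau(g)=n\bar g n^{-1}$, which preserves $Z_{\wG}(s_0)$ precisely because of $\bar s_0=n^{-1}\cdot s_0$ (not because $n$ fixes $s_0$), so that $\mu$ carries $Z_{\wG}(s_0)$-orbits to $Z_{\wG}(s_0)$-orbits and two orbits that meet must coincide --- is exactly the standard argument underlying that cited lemma.
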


We use this lemma for classifying the non-canonical mixed orbits. From section
\ref{semsim:real} we see that this concerns only the second and third family.

\textbf{Second family:} Here $n$, $g$ are given in \eqref{eq:fam2n},
\eqref{eq:fam2g} respectively. We have $s_0 = a_1u_1+a_2u_2$ where the $a_2 =
-\bar a_1$. Furthermore, $s=b_1v_1+b_2v_2$ with $b_1,b_2\in \R$ and
$v_1,v_2$ given by \eqref{eq:newsemsim1}. 
From Table \ref{NilpotentMixed2} we see that there are two nilpotent orbits
in $\u_{s_0}$.

For the first orbit we have that $\mye{021}-\mye{210}$ is a
representative fixed under $\mu$. Furthermore,
$$g\cdot (\mye{021}-\mye{210}) = 2\mye{220}+2\mye{211}+\mye{021}-\mye{010}.$$
Let $e$ denote the latter element. Let $(h,e,f)$ be a homogeneous
$\sll_2$-triple containing $e$, then the stabilizer $Z_{\wG}(s,h,e,f)$ has
order $27$ and hence trivial Galois cohomology by Criterion \ref{crt:pgroup}.
So in this case we get one real orbit.

For the second orbit we have that $4i\mye{102}$ is a representative fixed
under $\mu$. Furthermore, $g\cdot 4i\mye{102} = \mye{102}$. Let $(h,e,f)$ be
as before and set $Z=Z_{\wG}(s,h,e,f)$. By computing the Lie algebra of $Z$ and
a small computer calculation, it is seen that $H^1 Z^\circ$ consists of two
elements. Since the component group is of order 9, Criterion \ref{crit:alg1}
implies that $H^1 Z = H^1 Z^\circ$. The nontrivial cocycle is
$$c = \diag(-1,1,-1)\times \diag(1,-1,-1)\times \diag(-1,-1,1).$$
So with
$$a= \diag(i,-1,i)\times \diag(-1,i,i)\times \diag(i,i,-1)$$ we get
$a^{-1}\bar a = c$. Furthermore, $a\cdot \mye{102} = -\mye{102}$.
Hence here we get two non-conjugate real nilpotent parts:
$\mye{102}$ and $-\mye{102}$.

In conclusion we have three nilpotent parts of a mixed element with semisimple
part equal to $g\cdot s_0$:

\begin{longtable}{|l|c|} 
\caption{Nilpotent parts of mixed elements with semisimple part $b_1v_1+b_2v_2$}
\label{tab:mixedR1}  
  \\
\hline
Nurmiev & representative \\
\hline
1 &  $2\mye{220}+2\mye{211}+\mye{021}-\mye{010}$\\
    \hline
2 & $\mye{102}$ \\
    & $-\mye{102}$\\
\hline
\end{longtable}

\textbf{Third family:} Here $n=A\times A\times A$ with $A$ as in
\eqref{eq:fam3n} and $g= B\times B\times B$ with $B$ as in \eqref{eq:fam3g}.
We have $s_0 = aiu_1$ where $a\in \R$ and $s=av$ with $v$ given by
\eqref{eq:newsemsim2}. From Table
\ref{NilpotentMixed3} we see that there are 16 nilpotent parts of mixed elements
with semisimple part $s_0$. They are divided into five groups with numbers
1, 2, 4, 5, 7. We deal with them in that order.

We have that $\mye{120}-\mye{102}-\mye{021}+\mye{012}$ is a representative of
orbit 1, fixed under $\mu$. It is left invariant by $g$. The stabilizer
in this case is of order 27, hence the Galois cohomology is trivial
(Criterion \ref{crt:pgroup}). 

Let $e= \mye{012}+\mye{021}+\mye{102}$. Then $e$ is a representative of the
first orbit in the second group. We have that $\mu(e)$ lies in the orbit of
$\mye{210}+\mye{120}+\mye{201}$ which is a representative of the sixth orbit
of the second group. Hence by Lemma \ref{lem:Y} the image of this orbit
in $\u_s$ has no real points. This then necessarily holds for all orbits of
the second group.

Let $e=\mye{210}-\mye{201}$. Then $e$ is a representative of the first orbit
in group number 4, fixed under $\mu$. Furthermore $g\cdot e = e$.
Let $(h,e,f)$ be a homogeneous $\sll_2$-triple containing $e$. Let $Z$
denote the stabilizer of the quadruple $(s,h,e,f)$ in $\wG$.
Then by a small computer calculation it can be shown $H^1 Z^\circ$ is
trivial. As the component group has order 3, it follows that $H^1 Z=1$.
So there is one real orbit in this case (Criterion  \ref{crit:alg1}).

Let $e=\mye{012}+\mye{120}$. Then $e$ is a representative of he first orbit in
group number 5. here $\mu(e)$ lies in the second orbit of group 5. Therefore,
by Lemma \ref{lem:Y} the image of this orbit in $\u_s$ has no real points.
The same holds for the second orbit of group number 5.

We have that $n\cdot \mye{012}$ lies in the orbit of $\mye{021}$. Hence again
by Lemma \ref{lem:Y} we see that the image of the first orbit of group number
7 has no real points.

Summarizing we get the following real mixed orbits:

\begin{longtable}{|l|c|} 
\caption{Nilpotent parts of mixed elements with semisimple part $av$}\label{tab:mixedR2} \\
\hline
Nurmiev & representative \\
\hline
1 &  $\mye{120}-\mye{102}-\mye{021}+\mye{012}$\\
  &  $\mye{210}-\mye{012}-\mye{201}+\mye{102}$\\
  &  $\mye{021}-\mye{201}-\mye{120}+\mye{210}$\\
    \hline
4 & $\mye{210}-\mye{201}$ \\
  & $\mye{120}-\mye{021}$\\
  & $\mye{012}-\mye{102}$\\
\hline
\end{longtable}


\newcommand{\etalchar}[1]{$^{#1}$}

\end{document}